\documentclass[letterpaper,10 pt,conference]{ieeeconf}  
\IEEEoverridecommandlockouts                              

\let\labelindent\relax

\usepackage{comment}

\usepackage{subfigure}

\usepackage{amsmath, amssymb,bbm,amsthm,bbold}
\usepackage{xcolor}
\usepackage{xparse}
\usepackage{mathrsfs}
\usepackage{graphicx}
\usepackage{enumitem} 
\usepackage{hyperref}
\hypersetup{colorlinks=true,  linkcolor=blue,  breaklinks=true,  urlcolor=blue,  citecolor=blue}

\usepackage{soul}
\usepackage{url}

\newtheorem{theorem}{Theorem}
\newtheorem{remark}[theorem]{Remark}

\newtheorem{lemma}[theorem]{Lemma}

\newtheorem{proposition}[theorem]{Proposition}

\newtheorem{assumption}{Assumption}

\allowdisplaybreaks

\title{\LARGE\bf  Distributed State Estimation of Discrete-Time LTI Systems via Jordan Canonical Representation}

\author{Giulio Fattore, Maria Elena Valcher, Rui Gao and Guang-Hong Yang 
\thanks{G. Fattore and  M.E. Valcher
are with the Dipartimento di Ingegneria dell'Informazione, Universit\`a di Padova, via Gradenigo 6B, 35131 Padova, Italy, e-mail:  \texttt{fattoregiu@dei.unipd.it, meme@dei.unipd.it}. Rui Gao and Guang-Hong Yang are with the College of Information Science and Engineering, Northeastern University Shenyang 110819 China, and also with The State Key Laboratory of Synthetical Automation for Process Industries, Northeastern University Shenyang 110819 China, e-mail:  \texttt{gaorui@ise.neu.edu.cn,yang-guanghong@ise.neu.edu.cn.} 
}}

\date{}

\begin{document}
\maketitle

\begin{abstract}
In this paper, we address the problem of distributed state estimation for a discrete-time, linear time-invariant system. Building on the framework proposed in \cite{GaoYang}, we exploit the Jordan canonical form of the system matrix to develop a distributed estimation scheme that ensures the asymptotic convergence of the local state estimates to the true system state. The proposed approach relies on the idea that each node reconstructs the components of the system state that are detectable for it through a local Luenberger observer, while employing a consensus-based strategy to estimate the undetectable components.  Necessary and sufficient conditions for the existence of a distributed observer that guarantees asymptotic estimation accuracy are derived. Compared with the previous work \cite{GaoYang}, the proposed design offers greater flexibility in the selection of the coupling gains and leads to a less restrictive set of conditions for solvability.
\end{abstract}

\section{Introduction}
In recent decades, the increasing complexity and scale of modern systems have brought forth numerous challenges. Among these, the distributed state estimation problem has attracted significant research interest. In this framework, a central dynamical system is observed by a network of sensors, each capable of obtaining only partial measurements of the system state and having knowledge of part (or possibly all) of the system control inputs. Since individual sensors are typically unable to estimate the full state of the system, they rely on consensus strategies, exchanging information about their local estimates with neighboring nodes.

Over the past years, several distributed state estimation algorithms have been proposed, both in the stochastic setting, using variants of the Kalman filter \cite{Tomlin,OS_distrKalman2,OS_distrKalman}, and in the deterministic setting, using Luenberger-type observers. In the continuous-time domain, the distributed estimation problem has been extensively studied under various assumptions on network connectivity and joint detectability \cite{Han2019,yang2022state}. However, the problem becomes considerably more challenging in the discrete-time setting. In fact, while continuous-time solutions typically rely on consensus strategies with high-gain coupling, these approaches do not translate directly to discrete-time systems \cite{Wang24}. To address this issue, Park and Martins \cite{Nuno1, Nuno2, Nuno3} investigated distributed state estimation for discrete-time autonomous systems using a network of Luenberger-like observers. Each local observer, equipped with an additional internal state, measures a subset of the system outputs and computes a local estimate based on its own measurements and the estimates received from its neighbors through a communication graph. By introducing an augmented state, the observer design problem is reformulated as the synthesis of a decentralized dynamic output feedback controller, enabling the use of well-established results from classical control theory. Building upon this approach, \cite{Disaro_TAC2025} extends the results to systems subject to unknown process disturbances and partial control input measurements. 

In \cite{Wang24}, a unified framework for distributed state estimation of both continuous- and discrete-time linear systems—with possibly time-varying communication graphs—has been proposed. The approach is based on partitioning the system spectrum into disjoint subsets associated with the observable and unobservable subspaces. The estimation strategy then employs a high-gain mechanism in the continuous-time case and utilizes two related sampling rates for discrete-time systems.

In \cite{MitraSundaram}, Mitra and Sundaram proposed two distinct strategies for the design of distributed observers.
In the first approach, the authors assume the  detectability of each source component of the communication graph and introduce the notion of {\em multisensor observable canonical decomposition}. This decomposition reduces the system matrix and the aggregated output matrix into a lower block-triangular structure, where each pair of diagonal blocks defines an observable subsystem. This formulation allows each sensing agent to independently estimate a specific portion of the overall system state.
The second design strategy in \cite{MitraSundaram} is based on the assumption that, for each unstable eigenvalue of the system matrix, there exists at least one root node within every source component for which that eigenvalue is observable. Under this assumption, the authors propose a coordinated Jordan canonical transformation of the system matrix (to the best of our knowledge, this represents the first use of a Jordan decomposition in the context of distributed observer design) followed by a local transformation that brings the system into the standard (Kalman) detectability form. In this case, the estimation problem is addressed through an augmented-order observer.
In both strategies, the authors assume the ability to modify the edge weights and, if necessary, remove certain connections to form a spanning tree topology. Subsequently, in \cite{MitraSundaram2019}, the same authors extended their framework to scenarios involving adversarial agents, where some nodes in the network may be compromised. Still relying on the use of a Jordan canonical transformation, Gao and Yang in \cite{GaoYang} proposed a distributed observer design framework in which, for each node, a permutation matrix brings the system into the standard (Kalman) detectability form while preserving the Jordan structure of the system matrix. 
Each $i$th agent implements a Luenberger-type observer to independently estimate the entries of the state vector that are  detectable for it, meanwhile relying on a consensus-type strategy to estimate the remaining entries, that makes use of a common coupling strength.
This enables the derivation of necessary and sufficient conditions - on both the eigenvalues of the system matrix and the communication graph -  for the solvability of the distributed estimation problem, without modifying the graph topology or edge weights.\\
For an overview of the various methodologies for the design of distributed observers for LTI systems, both deterministic and stochastic, the reader is referred to \cite{ReviewDIstrObs,Rego19}.
\medskip

Building upon the results in \cite{GaoYang}, this paper advances the state of the art in distributed state estimation by introducing several key improvements. Unlike \cite{GaoYang}, in the proposed method each agent retains, within the locally estimated portion of the state,  only the information corresponding to Jordan miniblocks that are entirely detectable.  Meanwhile, to estimate the remaining entries, the agent relies on a consensus strategy that makes use of a specific coupling strength for each remaining miniblock. This strategy simplifies the selection of the coupling gains.
In addition, allowing for a different coupling strength for each miniblock, rather than a single global value, relaxes the existence conditions and leads to a less restrictive solvability criterion. 
More details on the comparison with \cite{GaoYang} are provided in Remark \ref{comparison}.
\smallskip

The paper is organized as follows. Section \ref{sec.pb.setup} formalizes the problem setup, identifying for each agent the components of the state that are completely observable, partially observable, and unobservable, and accordingly permuting the state vectors and the describing matrices. Section \ref{sec.pb.solution} presents the proposed solution, including necessary and sufficient conditions for problem solvability. Section \ref{sec.example} provides an illustrative example that demonstrates the effectiveness of the proposed approach. Finally, Section \ref{sec.conclusion} draws the main conclusions.
\medskip

{\bf Notation.} 
The sets of real numbers, nonnegative integers, and complex numbers are denoted by $\mathbb{R}$, $\mathbb{Z}_+$, and $\mathbb{C}$, respectively.
Given two integers $h$ and $k$, with $h \le k$, we let $[h,k]$ denote the set $\{h,h+1,\dots,k\}$.
The symbols $I$ and $\mathbb{0}$ denote the identity matrix and   the zero matrix/vector of suitable dimensions, respectively.  The $i$th {\em canonical vector} whose entries are all zeros except for the $i$th, which is equal to $1$, is denoted by $\mathbb{e}_{i}$ and its dimension can be deduced from the context. 
A square matrix $P\in {\mathbb R}^{n \times n}$ obtained from the identity $I_n$ via row-column permutations is called {\em permutation matrix}.
Given any matrix $Q$, its $(i,j)$th entry is denoted by $[Q]_{i,j}$.  
The {\em Kronecker product} is denoted by $\otimes$.
Given matrices $M_i, i\in [1,p]$, the  column stacking and the row juxtaposition of these matrices are denoted by ${\rm col}\{M_i\}_{i\in[1,p]}$,  and ${\rm row}\{M_i\}_{i\in[1,p]}$, respectively, while 
the block diagonal matrix whose $i$th diagonal block is the matrix $M_i$ is denoted by ${\rm diag} \{M_i\}_{i\in[1,p]}$. Given two sets $A$ and $B$, the {\em difference set} $A \setminus B$ is the set of elements of $A$ that are not in $B$. 
\smallskip
 
A {\em weighted, directed graph} (digraph) is   a triple ${\mathcal G} = ({\mathcal V}, {\mathcal E}, {\mathcal A})$, where ${\mathcal V} = \{1, \dots, N\}=[1,N]$ is the set of nodes, ${\mathcal E}\subset {\mathcal V}\times {\mathcal V}$  is the set of edges, and ${\mathcal A}\in \mathbb{R}^{N\times N}$ is the nonnegative, weighted {\em adjacency matrix} which satisfies $[{\mathcal A}]_{i,j}>  0$ if and only if $(j,i)\in {\mathcal E}$. 
We assume that $[{{\mathcal A}}]_{i,i} =0$ for every $i\in {\mathcal V}$.
The  {\em in-degree} of  node $i$ is $d_i \doteq \sum_{j=1}^N[{{\mathcal A}}]_{i,j}$, while the {\em in-degree matrix} $\mathcal{D}$ is the diagonal matrix defined as $\mathcal{D}\doteq {\rm diag}\{d_i\}_{i\in [1,N]}$. The {\em Laplacian}  associated with ${\mathcal G}$ is defined as ${\mathcal{L}}\doteq \mathcal{D}-\mathcal{A}$. 

The {\em spectrum} of a square matrix $A$, denoted by $\sigma(A)$, is the set of all its eigenvalues.
We let $r$ be the number of distinct eigenvalues of $A$, so that $\sigma(A) = \{\lambda_1, \dots, \lambda_r\}$.  The {\em algebraic multiplicity} and the {\em geometric multiplicity} of the eigenvalue $\lambda_\ell\in {\mathbb C}$  are denoted by $a_\ell$ and $g_\ell$, respectively. 
A matrix $A\in  {\mathbb C}^{n \times n}$ is in {\em Jordan form} \cite{HornJohnson} if it takes the form
\begin{equation}\label{Jform}
    A = {\rm diag} \{A^\ell\}_{\ell\in [1,r]},\end{equation}
where $A^\ell$ is the {\em Jordan block} of size $a_\ell \times a_\ell$ associated  with the eigenvalue $\lambda_\ell$, by this meaning that
\begin{equation}\label{Jblock}
A^\ell = {\rm diag} \{A^{\ell,h_\ell}\}_{h_\ell\in [1,g_\ell]}\in {\mathbb C}^{a_\ell \times a_\ell},\end{equation}
and $A^{\ell, h_\ell}$ is the $h_\ell$th {\em Jordan miniblock} associated with $\lambda_\ell$ and hence taking the form
\begin{equation}\label{Jminiblock}
A^{\ell, h_\ell} = \begin{bmatrix} \lambda_\ell & 1 & 0 &\dots & 0\cr   
0 & \lambda_\ell & 1 & \dots &0\cr
\vdots &\vdots &\ddots &\ddots& \vdots \cr
0 & 0 &0 & \ddots & 1\cr
0 & 0 &0 & \dots & \lambda_\ell\end{bmatrix}\in{\mathbb C
}^{d^{\ell, h_\ell} \times d^{\ell, h_\ell}}.\end{equation}
 \smallskip
 
\section{Problem set-up}\label{sec.pb.setup}
Consider a discrete-time linear time-invariant state-space model
\begin{align}\label{eq.sys}
    x(t+1)=Ax(t)+Bu(t),
    \qquad t\in\mathbb{Z_+},
\end{align}
where $x(t)\in\mathbb{R}^n$ is the state and $u(t)\in\mathbb{R}^m$ the control input. For simplicity, we assume that $A\in\mathbb{R}^{n\times n}$ is  {\em  in Jordan form} \eqref{Jform}-\eqref{Jblock}-\eqref{Jminiblock}, which implies, in particular, that all its $r$ distinct eigenvalues, $\lambda_1, \lambda_2, \dots, \lambda_r$,  are real. It entails no loss of generality  assuming that the eigenvalues have been sorted in such a way that $|\lambda_\ell|\ge1$  for $\ell\in [1,r_u]$, while $|\lambda_\ell|<1$  for $\ell\in [r_u+1,r]$. 
In this way, $r_u$ denotes the number of distinct unstable eigenvalues of $A$. 
\begin{remark}The assumption that $A$ is in Jordan form is not restrictive, since we can always reduce ourselves to that case by means of a change of basis.  The assumption that all the eigenvalues of $A$ are real is restrictive, but the extension to the general case of complex eigenvalues can be easily obtained by resorting to the {\em real Jordan form}, as in \cite{MitraSundaram2019} and \cite{GaoYang}. Both assumptions have been introduced only to simplify the notation.
\end{remark}
We assume that there is a network of $N$ sensors, connected through a directed graph $\mathcal{G}=({\mathcal V}, \mathcal{E},\mathcal{A})$, each of them taking an indirect measure of the state:
\begin{align}\label{eq.out}
    y_i(t)=C_ix(t),\quad i\in[1,N],
\end{align}
where $y_i(t)\in\mathbb{R}^{p_i}$. 
The objective is to design, for every node $i \in [1,N]$, a local observer that produces an estimate $\hat{x}_i(t)$ of the true state $x(t)$, such that the $i$th estimation error
\begin{align*}
    e_i(t)\doteq x(t)- \hat{x}_i(t),
\end{align*}
 asymptotically goes to zero, that is,
\begin{align*}
    \lim_{t \to +\infty} e_i(t) = \mathbb{0}, \qquad \forall i \in [1,N].
\end{align*}
Of course, a necessary condition for the previous problem to be solvable is that by making use of the output measurements of all the agents, it is possible to estimate the state of the system. Therefore, we introduce the following assumption.
\smallskip

\begin{assumption}\label{ass.jointdet}
\cite{GaoYang}
The $N$ agents are {\em jointly detectable}, i.e.,
$$\left(A, \begin{bmatrix} C_1 \cr \vdots \cr C_N \end{bmatrix}\right)
\ \mbox{is a detectable pair.}$$
\end{assumption}
\smallskip

\noindent According to the  block-partition of $A$, we divide the state as 
\begin{align*}
    x(t)={\rm col}\{x^\ell(t)\}_{\ell\in[1,r]},\quad x^\ell(t)={\rm col}\{x^{\ell,h_\ell}(t)\}_{h_\ell\in[1,g_\ell]},
\end{align*}
and the output matrices $C_i$, $i\in[1,N],$ as
$$C_i = \begin{bmatrix} C_i^1 & \dots &C_i^r\end{bmatrix}\in {\mathbb R}^{p_i \times n},$$
where $C_i^\ell$, $\ell\in[1,r],$ are in turn partitioned as
$$C_i^\ell = \begin{bmatrix} C_i^{\ell,1} & \dots& C_i^{\ell,g_\ell}\end{bmatrix}\in {\mathbb R}^{p_i \times a_\ell},$$
and each   block $C_i^{\ell, h_\ell}$ has size $p_i\times d^{\ell, h_\ell}$. We now want to act on the matrix pairs $(A,C_i), i\in [1,N],$ in order to sort, for every $i$, the entries of the vector $x(t)$ according to the detectability properties of the pair $(A,C_i)$. The procedure is close to the one proposed in \cite{GaoYang}, but introduces some  changes that not only lead to a simpler notation but also pave the way to simpler solvability conditions. 
\medskip

{\bf Step 1.}\ For every $i\in [1,N]$ and $\ell\in [1,r_u]$,  we partition the set of indices $h_\ell \in [1, g_\ell]$ into three disjoint sets, based on the properties of the block $C_i^{\ell,h_\ell}$ in $C_i$ (corresponding to the Jordan miniblock $A^{\ell,h_\ell}$):
\begin{subequations}
\label{eq:defGil}
    \begin{align}
    \mathbb{G}_{i,1}^{\ell}&\doteq\{h_\ell\in[1,g_\ell]:C_i^{\ell,h_\ell}=\mathbb{0}\},\\
    \mathbb{G}_{i,2}^{\ell}&\doteq\{h_\ell\in[1,g_\ell]:C_i^{\ell,h_\ell}\mathbb{e}_1=\mathbb{0}\  \wedge\ C_i^{\ell,h_\ell}\ne\mathbb{0} \},\\
    \mathbb{G}_{i,3}^{\ell}&\doteq\{h_\ell\in[1,g_\ell]:C_i^{\ell,h_\ell}\mathbb{e}_1\ne\mathbb{0}\}.
\end{align}
\end{subequations}
Clearly, $\mathbb{G}_{i,1}^{\ell} \cup \mathbb{G}_{i,2}^{\ell} \cup \mathbb{G}_{i,3}^{\ell} = [1, g_\ell].$ On the other hand, for every pair $(\ell, h_\ell)$, with $\ell\in [1,r_u]$ and  $h_\ell\in [1, g_\ell]$, 
one can  introduce the index sets
\begin{subequations}
    \begin{align}
\mathcal{V}_{1}^{\ell,h_\ell}&\doteq\{i:h_\ell\in\mathbb{G}_{i,1}^{\ell}\},\\
\mathcal{V}_{2}^{\ell,h_\ell}&\doteq\{i:h_\ell\in\mathbb{G}_{i,2}^{\ell}\},\\
\mathcal{V}_{3}^{\ell,h_\ell}&\doteq\{i:h_\ell\in\mathbb{G}_{i,3}^{\ell}\}.
\end{align}
\end{subequations}
Note that the three sets are disjoint, and
$\mathcal{V}_{1}^{\ell,h_\ell} \cup \mathcal{V}_{2}^{\ell,h_\ell} \cup \mathcal{V}_{3}^{\ell,h_\ell} =[1,N].$
\begin{proposition}\label{prop.v3Nonempty}
Under Assumption \ref{ass.jointdet}, for every pair $(\ell, h_\ell)$, with $\ell\in [1,r_u]$ and  $h_\ell\in [1, g_\ell]$, the set ${\mathcal V}_3^{\ell,h_\ell}$ is not empty, or, equivalently,
there exists $i\in [1,N]$ such that $h_\ell\in {\mathbb G}_{i,3}^\ell.$
\end{proposition}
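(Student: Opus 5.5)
Argue by contradiction via the PBH test. Suppose that for some $\ell\in[1,r_u]$ and some $h_\ell\in[1,g_\ell]$ the set ${\mathcal V}_3^{\ell,h_\ell}$ is empty, i.e., $C_i^{\ell,h_\ell}\mathbb{e}_1=\mathbb{0}$ for every $i\in[1,N]$. We build a vector $v\in{\mathbb R}^n$ that is an eigenvector of $A$ for the unstable eigenvalue $\lambda_\ell$ and lies in the kernel of the stacked matrix ${\rm col}\{C_i\}_{i\in[1,N]}$. This contradicts Assumption \ref{ass.jointdet}.

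\emph{Step 1: the candidate vector.} Let $v$ be zero in every block except the one corresponding to the miniblock $A^{\ell,h_\ell}$. In that block, set $v$ equal to the first canonical vector $\mathbb{e}_1\in{\mathbb R}^{d^{\ell,h_\ell}}$. By the upper bidiagonal form \eqref{Jminiblock}, $A^{\ell,h_\ell}\mathbb{e}_1=\lambda_\ell\mathbb{e}_1$. Since $A$ is block diagonal by \eqref{Jform}-\eqref{Jblock}, it follows that $Av=\lambda_\ell v$ with $v\ne\mathbb{0}$.

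\emph{Step 2: invisibility to every sensor.} By the block partition of $C_i$, we have $C_iv=C_i^{\ell,h_\ell}\mathbb{e}_1$. Under the contradiction hypothesis this is $\mathbb{0}$ for every $i$. Hence ${\rm col}\{C_i\}_{i\in[1,N]}\,v=\mathbb{0}$.

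\emph{Step 3: conclusion.} From Steps 1 and 2,
$$\begin{bmatrix}\lambda_\ell I-A\\ {\rm col}\{C_i\}_{i\in[1,N]}\end{bmatrix}v=\mathbb{0},$$
so this PBH matrix does not have full column rank at $\lambda_\ell$. Since $\ell\le r_u$, we have $|\lambda_\ell|\ge1$. By the PBH detectability test, $(A,{\rm col}\{C_i\})$ is therefore not detectable, contradicting Assumption \ref{ass.jointdet}. The equivalence stated in the proposition follows directly from the definitions of ${\mathcal V}_3^{\ell,h_\ell}$ and ${\mathbb G}_{i,3}^\ell$.

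No step here is a real obstacle. The only point needing care is the indexing in Step 2, namely checking that $C_iv$ picks out exactly the first column of $C_i^{\ell,h_\ell}$. The argument is entirely over the reals because all eigenvalues of $A$ are assumed real.
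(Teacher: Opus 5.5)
Your proof is correct and follows the same route as the paper: apply the PBH detectability test to the Jordan form of $A$ and conclude that some $C_i^{\ell,h_\ell}\mathbb{e}_1\ne\mathbb{0}$. Your version spells out the eigenvector $v$ supported on the first coordinate of the miniblock $A^{\ell,h_\ell}$, a step the paper leaves implicit.
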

\begin{proof}
    By Assumption \ref{ass.jointdet} and the hypothesis that $A$ is  Jordan form, it follows from the PBH detectability test that for every $\ell\in [1,r_u]$ and every $h_\ell\in[1,g_\ell]$ there exists $i\in [1,N]$ such that $C_{i}^{\ell,h_\ell} {\mathbb e}_1 \ne {\mathbb 0}.$
    This implies that $h_\ell \in {\mathbb G}_{i,3}^\ell$.
\end{proof}
\begin{remark} It is worth observing that for every $h_\ell\in {\mathbb G}_{i,3}^\ell$ the pair $(A^{\ell, h_\ell},C_i^{\ell, h_\ell})$ is observable. So,
Proposition \ref{prop.v3Nonempty} states  that, under Assumption \ref{ass.jointdet},  for  every $\ell\in [1,r_u]$ and every $h_\ell\in [1, g_\ell]$, there exists at least one agent $i\in \mathcal{V}$ such that the pair $(A^{\ell,h_\ell},C_i^{\ell,h_\ell})\in {\mathbb R}^{d^{\ell, h_\ell}\times d^{\ell, h_\ell}} \times {\mathbb R}^{p_i \times d^{\ell, h_\ell}}$ is observable.  
\end{remark}

Based on the definitions \eqref{eq:defGil} and the fact that $\mathbb{G}_{i,1}^{\ell} \cup \mathbb{G}_{i,2}^{\ell} \cup \mathbb{G}_{i,3}^{\ell} = [1, g_\ell]$, we can   claim that for every $i\in[1,N]$ and $\ell\in [1,r_u]$, 
there exists a permutation matrix $P^{\ell}_i$  such that 
\begin{align*}
 ({P^{\ell}_i})^\top A^{\ell}P^{\ell}_i&=\begin{bmatrix}
    A_{i,1}^{\ell} &\vline&\mathbb{0}&\mathbb{0}\\
       \hline
    \mathbb{0}&\vline&A_{i,2}^{\ell} &\mathbb{0}\\
    \mathbb{0}&\vline&\mathbb{0}&A_{i,3}^{\ell}
\end{bmatrix},
  ({P^{\ell}_i})^\top B^\ell=\begin{bmatrix}
    B_{i,1}^{\ell}\\
    \hline
    B_{i,2}^{\ell}\\
    B_{i,3}^{\ell}
\end{bmatrix}\\
C_iP^{\ell}_i&=\begin{bmatrix}
    \mathbb{0}&\vline&C_{i,2}^{\ell}&C_{i,3}^{\ell}
\end{bmatrix},\\
 ({P^{\ell}_i})^\top x^\ell(t)&=\begin{bmatrix}
    x_{i,1}^{\ell}(t)\\
    \hline
    x_{i,2}^{\ell}(t)\\
    x_{i,3}^{\ell}(t)
\end{bmatrix}, 
\end{align*}
where the block diagonal matrices $A_{i,1}^\ell, A_{i,2}^\ell$ and $A_{i,3}^\ell$ group all Jordan miniblocks corresponding to indices $h_\ell$ belonging to ${\mathbb G}_{i,1}^{\ell}, {\mathbb G}_{i,2}^{\ell}$ and ${\mathbb G}_{i,3}^{\ell}$, respectively, i.e.,
\begin{align*}
    A_{i,1}^\ell&\doteq{\rm diag}\{A^{\ell,h_\ell}\}_{h_\ell\in\mathbb{G}_{i,1}^{\ell}},\\
    A_{i,2}^\ell&\doteq{\rm diag}\{A^{\ell,h_\ell}\}_{h_\ell\in\mathbb{G}_{i,2}^{\ell}},\\
    A_{i,3}^\ell&\doteq{\rm diag}\{A^{\ell,h_\ell}\}_{h_\ell\in\mathbb{G}_{i,3}^{\ell}}.
\end{align*}
The blocks of $(P_i^\ell)^\top B^{\ell}$ and $C_iP_i^\ell$ are defined accordingly.
In particular,  the matrix $C_{i,2}^\ell$ (respectively, $C_{i,3}^\ell$)  consists of the blocks $C_i^{\ell,h_\ell}$ of the matrix $C_i$ corresponding to the Jordan miniblocks $A^{\ell,h_\ell}$ with $h_\ell\in \mathbb{G}_{i,2}^{\ell}$ (respectively, $h_\ell\in \mathbb{G}_{i,3}^{\ell}$).
\medskip

Note that if $h_\ell \in \mathbb{G}_{i,2}^{\ell}$ and we let $t_i^{\ell,h_\ell}\in [2, d^{\ell, h_\ell}]$ denote the index of the first nonzero column of $C_i^{\ell,h_\ell},$
then $C_i^{\ell,h_\ell}$ takes the form
$$C_i^{\ell,h_\ell} = \begin{bmatrix} {\mathbb 0}_{p_i\times (t_i^{\ell,h_\ell}-1)} & C_{i,o}^{\ell,h_\ell}\end{bmatrix},$$
and the first column of $C_{i,o}^{\ell,h_\ell}$ is nonzero.  Moreover, the corresponding miniblock $A^{\ell, h_\ell}$ can be block partitioned as
$$A^{\ell,h_\ell} = \begin{bmatrix} A^{\ell,h_\ell}_{i,u} &A^{\ell,h_\ell}_{i,*}\cr\mathbb{0} & A^{\ell,h_\ell}_{i,o}\end{bmatrix},$$
where $A^{\ell,h_\ell}_{i,u}$ and $A^{\ell,h_\ell}_{i,o}$ 
are described as in \eqref{Jminiblock} (and hence 
have, in turn, the structure of two Jordan miniblocks) and have  sizes $t_i^{\ell,h_\ell}-1$ and $d^{\ell, h_\ell} - (t_i^{\ell,h_\ell}-1)$, respectively. 
It is easy to see that the pair $(A^{\ell,h_\ell}_{i,o},C_{i,o}^{\ell,h_\ell})$ is observable.\\
To improve the paper readability, we have grouped the main symbols
in the following table.\\

{\centering
\begin{tabular}{|c|p{6.6cm}|}
\hline
\textbf{Symbol} & \textbf{Meaning} \\
\hline
$\lambda_\ell$ &Generic eigenvalue of $A$\\
$\alpha_\ell$ & Algebraic multiplicity of $\lambda_\ell$ \\
$g_\ell$ & Geometric multiplicity of $\lambda_\ell$\\
$A^\ell$ & Jordan block associated with $\lambda_\ell$ \\
$A^{\ell,h_\ell}$ & $h_\ell$th Jordan miniblock associated with $\lambda_\ell$ \\
 $h_\ell$ & Index of a generic Jordan miniblock associated with $\lambda_\ell$ \\
$d^{\ell, h_\ell}$ & Dimension of the $h_\ell$th Jordan miniblock  associated with $\lambda_\ell$ \\
$t_i^{\ell,h_\ell}$ & Index of the first nonzero column of the block $C_i^{\ell,h_\ell}$, for $h_\ell\in {\mathbb G}_{i,2}^\ell$\\
\hline
\end{tabular}}
\medskip

{\bf Step 2.} We now focus on the pair $(A_{i,2}^\ell,C_{i,2}^\ell)$ and permute the entries in such a way that all the entries corresponding to the first $t_i^{\ell,h_\ell} -1$ entries of each Jordan miniblock $A^{\ell, h_\ell}, h_\ell\in {\mathbb G}_{i,2}^\ell$ come first, while the entries corresponding to the last $d^{\ell, h_\ell} - (t_i^{\ell,h_\ell}-1)$ entries for each such miniblock are at the end. This amounts to considering the permuted matrices
\begin{align*}
&{(P_{i,2}^\ell)}^\top A_{i,2}^\ell{P_{i,2}^\ell}=\begin{bmatrix}
A_{i,2u}^{\ell}&\vline&A_{i,2*}^{\ell}\\
    \hline
 \mathbb{0} &\vline& A_{i,2o}^{\ell}
\end{bmatrix}, \ {(P_{i,2}^\ell)}^\top x_{i,2}^\ell=\begin{bmatrix}
    x_{i,2u}^{\ell}\\
    \hline
    x_{i,2o}^{\ell}
\end{bmatrix},\\
&{(P_{i,2}^\ell)}^\top  B^\ell=\begin{bmatrix}
    B_{i,2u}^{\ell}\\
    \hline
    B_{i,2o}^{\ell}
\end{bmatrix},\quad C_{i,2}^\ell{P_{i,2}^\ell}=\begin{bmatrix}
    \mathbb{0}&\vline&C_{i,2o}^{\ell}
\end{bmatrix},
\end{align*}
where 
\begin{align*}
A_{i,2u}^\ell &\doteq {\rm diag} \{A^{\ell,h_\ell}_{i,u} \}_{h_\ell \in {\mathbb G}_{i,2}^\ell},\\
A_{i,2o}^\ell &\doteq {\rm diag} \{A^{\ell,h_\ell}_{i,o} \}_{h_\ell \in {\mathbb G}_{i,2}^\ell}, \\
C_{i,2o}^{\ell}&\doteq {\rm row} \{C^{\ell,h_\ell}_{i,o} \}_{h_\ell \in {\mathbb G}_{i,2}^\ell},
\end{align*}
namely $A_{i,2u}^\ell, A_{i,2o}^\ell$ and $C_{i,2o}^{\ell}$ consist of all blocks $A_{i,u}^{\ell,h_\ell}, A_{i,o}^{\ell,h_\ell}$ and
$C_{i,o}^{\ell,h_\ell}$, respectively,  corresponding to the indices $h_\ell\in \mathbb{G}_{i,2}^{\ell}$. The other blocks are defined accordingly.
\smallskip

{\bf Step 3.}\ We now put together the outcomes of the two previous steps and introduce a final permutation that groups  analogous blocks corresponding to distinct (unstable) eigenvalues. We define  
\begin{align*}
 A_{i,1}&\doteq{\rm diag}\{A_{i,1}^{\ell}\}_{\ell\in [1,r_u]},\quad 
 A_{i,3}\doteq{\rm diag}\{A_{i,3}^{\ell}\}_{\ell\in [1,r_u]},  \\
A_{i,2u}&\doteq{\rm diag}\{A_{i,2u}^{\ell}\}_{\ell\in [1,r_u]},\quad  
A_{i,2*}\doteq{\rm diag}\{A_{i,2*}^{\ell}\}_{\ell\in [1,r_u]},\\ 
A_{i,2o}&\doteq{\rm diag}\{A_{i,2o}^{\ell}\}_{\ell\in [1,r_u]},
\\
 B_{i,1}&\doteq{\rm col}\{{B_{i,1}^{\ell}}\}_{\ell\in [1,r_u]},\quad 
  B_{i,3}\doteq{\rm col}\{{B_{i,3}^{\ell}}\}_{\ell\in [1,r_u]},\\
 B_{i,2u}&\doteq{\rm col}\{{B_{i,2u}^{\ell}}\}_{\ell\in [1,r_u]},\quad 
B_{i,2o}\doteq{\rm col}\{{B_{i,2o}^{\ell}}\}_{\ell\in [1,r_u]},\\
C_{i,2o}&\doteq{\rm row}\{{C_{i,2o}^{\ell}}\}_{\ell\in [1,r_u]},\quad
C_{i,3}\doteq{\rm row}\{{C_{i,3}^{\ell}}\}_{\ell\in [1,r_u]}, \\
 x_{i,1}(t)&\doteq{\rm col}\{{x_{i,1}^{\ell}(t)}\}_{\ell\in [1,r_u]},\
x_{i,3}(t)\doteq{\rm col}\{{x_{i,3}^{\ell}(t)}\}_{\ell\in [1,r_u]},\\
x_{i,2u}(t)&\doteq{\rm col}\{{x_{i,2u}^{\ell}(t)}\}_{\ell\in [1,r_u]},
x_{i,2o}(t)\doteq{\rm col}\{{x_{i,2o}^{\ell}(t)}\}_{\ell\in [1,r_u]}
\end{align*}
We also group together Jordan blocks corresponding to stable eigenvalues:
\begin{align*}
A_{s}&\doteq{\rm diag}\{A^{\ell}\}_{\ell\in[r_u+1,r]},\quad B_{s}\doteq{\rm diag}\{B^{\ell}\}_{\ell\in[r_u+1,r]},\\
 C_{s}& \doteq{\rm row}\{C^{\ell}\}_{\ell\in[r_u+1,r]}, \quad x_{i,s}(t)\doteq{\rm col}\{{x_{i,s}^{\ell}(t)}\}_{\ell\in [1,r_u]}.
\end{align*}
As a result of all previous re-orderings and definitions,  we can claim that for every $i\in [1,N]$
there exists a permutation matrix $Q_i\in {\mathbb R}^{n \times n}$ such that 
\begin{align}
{Q_i}^\top AQ_i&=\begin{bmatrix}
    F_{i,u}&\vline& F_{i,*}\\
    \hline
    \mathbb{0}&\vline& F_{i,d}
\end{bmatrix} \nonumber \\
&\doteq \begin{bmatrix}
    A_{i,1}&\mathbb{0} &\vline&\mathbb{0}&\mathbb{0}&\mathbb{0}\\
    \mathbb{0}&A_{i,2u}&\vline&A_{i,2*} &\mathbb{0}&\mathbb{0}\\
    \hline
    \mathbb{0} &\mathbb{0}&\vline&A_{i,2o}&\mathbb{0}&\mathbb{0} \\
    \mathbb{0}&  \mathbb{0}&\vline&\mathbb{0}&A_{i,3}&\mathbb{0}\\
        \mathbb{0}&  \mathbb{0}&\vline&\mathbb{0}&\mathbb{0}&A_s
\end{bmatrix} \nonumber\\
C_iQ_i&=\begin{bmatrix}
    \mathbb{0}&\vline&H_{i,d}
\end{bmatrix} 
\doteq\begin{bmatrix}
    \mathbb{0}&\mathbb{0}&\vline&C_{i,2o}&C_{i,3}&C_{s}
\end{bmatrix}, \nonumber\\
{Q_i}^\top x(t)&=\begin{bmatrix}
    z_{i,u}(t)\\
    \hline
    z_{i,d}(t)
\end{bmatrix}
\doteq\begin{bmatrix}
    x_{i,1}(t)\\
    x_{i,2u}(t)\\
    \hline
    x_{i,2o}(t)\\
    x_{i,3}(t)\\
    x_{i,s}(t)
\end{bmatrix}, \label{eq.zdzu}\\ {Q_i}^\top B&=\begin{bmatrix}
    G_{i,u}\\
    \hline
    G_{i,d}
\end{bmatrix} 
\doteq\begin{bmatrix}
    B_{i,1}\\
    B_{i,2u}\\
    \hline
    B_{i,2o}\\
    B_{i,3}\\
    B_{s}
\end{bmatrix}.  \nonumber
\end{align}

From now on, we will rely on the following assumption\footnote{Note that this assumption is also used in \cite{GaoYang}, but it remains hidden. Differently, the matrix governing the estimation error dynamics of the Luenberger observer proposed in (21) would not be Schur stable.}.
\begin{assumption}\label{ass.obs}
    For every $i\in[1,N]$ and $\ell\in [1,r_u]$, the vectors 
    \begin{equation}
    \{C_i^{\ell,h_\ell}\mathbb{e}_{t_i^{\ell,h_\ell}}
    \}_{h_\ell\in\mathbb{G}_{i,2}^{\ell}} \cup \{C_i^{\ell,h_\ell}\mathbb{e}_{1}\}_{h_\ell\in\ \mathbb{G}_{i,3}^{\ell}} 
     \end{equation} 
    are a linearly independent set.  
    \end{assumption}

It is worth remarking that if Assumption \ref{ass.obs} holds, then it follows from the observability of the pairs 
$(A^{\ell,h_\ell},C_i^{\ell,h_\ell})$  for all $h_\ell\in\mathbb{G}_{i,3}^{\ell}$, and of the pairs 
$(A_{i,2o}^\ell,C_{i,2o}^\ell)$ for all $h_\ell\in \mathbb{G}_{i,2}^{\ell}$, 
 that for every $i\in[1,N]$ and $\ell\in [1,r_u]$, the pair 
$$\left(
    \begin{bmatrix}
        A_{i,2o}^{\ell}&\mathbb{0}\\
        \mathbb{0}&A_{i,3}^{\ell}
    \end{bmatrix},\begin{bmatrix}
    C_{i,2o}^{\ell}& C_{i,3}^{\ell}
\end{bmatrix}\right)$$ 
    is observable. Therefore for every $i\in [1,N]$ the pair $(F_{i,d}, H_{i,d})$ (that includes also the stable eigenvalues of $A$, namely the block $A_s$) is detectable.

\section{Problem Solution}\label{sec.pb.solution}

{\bf Preliminary estimate of $z_{i,d}(t)$:}\ By relying on the fact that for each $i\in [1,N]$ the pair 
$(F_{i,d}, H_{i,d})$ is detectable, we can
 design a Luenberger observer that allows agent $i$ to estimate the part of the vector $x(t)$ that is detectable for it, namely $z_{i,d}(t)$ (see \eqref{eq.zdzu}), as follows:
\begin{align}\label{eq.luenberger}
\hat z_{i,d}(t+1)=&F_{i,d}\hat z_{i,d}(t)+G_{i,d}u(t)\\
    &+L_{i,d}(y_i(t)-H_{i,d}\hat z_{i,d}(t)),\nonumber
\end{align}
where $L_{id}$ is chosen so that $F_{i,d} - L_{i,d} H_{i,d}$ is Schur stable. 
If we define the estimation error of the part of the state that is detectable for agent $i$ as $\eta_{i,d}(t) \doteq z_{i,d}(t)-\hat z_{i,d}(t)$, then the estimation error dynamics are given by
\begin{align}
     \eta_{i,d}(t+1)=(F_{i,d}-L_{i,d}H_{i,d}) \eta_{i,d}(t),
\end{align} 
and $\eta_{i,d}(t)$ asymptotically goes to zero, for every initial condition $\eta_{i,d}(0)$.
\smallskip

{\bf Alternative decomposition:}\ We now  introduce another permutation matrix, alternative to $Q_i$. After having introduced the sets ${\mathbb G}_{i,k}^\ell, k\in[1,3],$ and the block diagonal matrices $A_{i,k}^\ell, k\in [1,3],$ for every $i\in [1,N]$ and every $
\ell\in [1,r_u],$ as in Step 1, we let $R_i$ denote a permutation matrix that orders the unstable miniblocks of the matrix $A$ according to the sets $\mathbb{G}_{i,1}^{\ell}$, $\mathbb{G}_{i,2}^{\ell}$, and $\mathbb{G}_{i,3}^{\ell}$. 
This amounts to saying that\footnote{Compared to the permutation $Q_i$ we adopted in Step 2, we have simply ordered the blocks $A^{\ell, h_\ell}, h_\ell\in {\mathbb G}_{i,2}^\ell$, without splitting them on the basis of the (zero columns of the) corresponding block $C_i^{\ell, h_\ell}$.}
\begin{align}
{R_i}^\top AR_i&=\begin{bmatrix}
    A_{i,u}&\vline& {\mathbb 0}\\
    \hline
    \mathbb{0}&\vline& A_{i,d}
\end{bmatrix} = \begin{bmatrix}
    A_{i,1}&\mathbb{0} &\vline&\mathbb{0}&\mathbb{0}\\
    \mathbb{0}&A_{i,2}&\vline& \mathbb{0} &\mathbb{0}\\
    \hline
    \mathbb{0}&  \mathbb{0}&\vline& A_{i,3}&\mathbb{0}\\
        \mathbb{0}&  \mathbb{0}&\vline&\mathbb{0}&A_s
\end{bmatrix}, \nonumber\\
C_iR_i&= 
 \begin{bmatrix}
    \mathbb{0}&C_{i,2}&\vline& C_{i,3}&C_{s}
\end{bmatrix}, \nonumber\\
{R_i}^\top B &=
\begin{bmatrix}
     B_{i,u}\\
     \hline
     B_{i,d}
 \end{bmatrix} =
\begin{bmatrix}
    B_{i,1}\\
    B_{i,2}\\
    \hline
    B_{i,3}\\
    B_{s}
\end{bmatrix},  \nonumber
\end{align}
where $A_{i,2}\doteq{\rm diag}\{A_{i,2}^{\ell}\}_{\ell\in [1,r_u]}$, $ B_{i,2}\doteq{\rm col}\{{B_{i,2}^{\ell}}\}_{\ell\in [1,r_u]}$, and $ C_{i,2}\doteq{\rm row}\{{C_{i,2}^{\ell}}\}_{\ell\in [1,r_u]}$.
This corresponds to splitting the (permuted) state vector as follows:
\begin{align*}
{R_i}^\top x(t)=\begin{bmatrix}
    x_{i,u}(t)\\
    \hline
    x_{i,d}(t)
\end{bmatrix}\doteq\begin{bmatrix}
    x_{i,1}(t)\\
    x_{i,2}(t)\\
    \hline
    x_{i,3}(t)\\
    x_{i,s}(t)
\end{bmatrix},\end{align*}
where  $x_{i,2}(t)\doteq{\rm col}\{{x_{i,2}^{\ell}(t)}\}_{\ell\in [1,r_u]}$. \\
Agent $i$ will separately estimate $x_{i,d}(t)$ and $x_{i,u}(t)$ using two different observers.
The estimate $\hat x_{i,d}(t)$ will be obtained from the estimate  $\hat z_{i,d}(t)$ derived in  \eqref{eq.luenberger},
while the estimate  $\hat x_{i,u}(t)$ will be obtained using the estimates   of $x_{i,u}(t)$ provided by agent $i$th neighbors. Note that each entry of the estimates provided by the neighbors may be part of either the detectable or the undetectable part of the neighbor state vectors, and therefore for every $j\in {\mathcal N}_i$, where ${\mathcal N}_i \doteq \{j\in [1,N]: (j,i) \in {\mathcal E}\}$ is the set of indices of the neighbors of $i$,  such estimated entries may be part   either of $\hat x_{j,d}(t)$ or of $\hat x_{j,u}(t)$.

The detailed derivation is provided in the following. 
 The final estimate that agent $i$ provides of the state $x(t)$ will be
\begin{align*}
\hat x_i(t) = R_i \begin{bmatrix}
    \hat x_{i,u}(t)\\
    \hline
    \hat x_{i,d}(t)
\end{bmatrix}.\end{align*}
\smallskip

{\bf Estimate of $x_{i,d}(t)$:}\
The
estimate $x_{i,d}(t)$ provided  by agent $i$ relies on the Luenberger observer \eqref{eq.luenberger} and is described  by the following state-space  model:
\begin{subequations}\label{eq.lunberger.directed}
\begin{align}
    \hat z_{i,d}(t+1)=&F_{i,d}\hat z_{i,d}(t)+G_{i,d}u(t)\\
    &+L_{i,d}(y_i(t)-H_{i,d}\hat z_{i,d}(t)),\nonumber\\
    \hat x_{i,d}(t) =& \begin{bmatrix}
        \mathbb{0}&I
    \end{bmatrix} \hat z_{i,d}(t),\label{eq.lunberger.directed.b}
\end{align}
\end{subequations}
where we exploited the relationship between $x_{i,d}(t)$ and $z_{i,d}(t)$, namely:
$$x_{i,d}(t)=\begin{bmatrix}
        x_{i,3}(t)\\
         x_{i,s}(t)
    \end{bmatrix}=\begin{bmatrix}
        \mathbb{0}&I
    \end{bmatrix}
    \begin{bmatrix}
     x_{i,2o}(t)\\
     x_{i,3}(t)\\
     x_{i,s}(t)
\end{bmatrix} = \begin{bmatrix}
        \mathbb{0}&I
    \end{bmatrix}  z_{i,d}(t).$$
As noted previously, the detectability of the pair $(F_{i,d}, H_{i,d})$ ensures that there exists $L_{i,d}$ such that $F_{i,d}- L_{i,d} H_{i,d}$ is Schur stable, and hence $\eta_{i,d}(t)= z_{i,d}(t) - \hat z_{i,d}(t)$ asymptotically converges to $0$.  Therefore
$$e_{i,d}(t) \doteq x_{i,d}(t) - \hat x_{i,d}(t)= \begin{bmatrix}
        \mathbb{0}&I
    \end{bmatrix}  \eta_{i,d}(t)$$
asymptotically converges to $0$, in turn.
\smallskip

{\bf Estimate of $x_{i,u}(t)$:}\
We design an observer of the type
\begin{align}
   \hat  x_{i,u}(t+1)=&A_{i,u}\hat x_{i,u}(t)+B_{i,u}u(t) \label{eq.xiuhat}\\
&+K_iA_{i,u}\sum_{j\in\mathcal{N}_i}[\mathcal{A}]_{i,j}\begin{bmatrix}
        I&\mathbb{0}
    \end{bmatrix}R_i^\top(\hat x_j(t)-\hat x_i(t))
    \nonumber
\end{align}
 where ${\mathcal A}$ is the adjacency matrix of the communication digraph ${\mathcal G}$ and $K_i$ is a block-diagonal matrix to be designed in the following form:
\begin{equation*}
  {\small  K_i \doteq \begin{bmatrix}{\rm diag} \left\{ K_{i,1}^\ell\right\}_{\ell\in [1,r_u]} & \mathbb{0} \cr \mathbb{0}&{\rm diag} \left\{K_{i,2}^\ell\right\}_{\ell\in [1,r_u]  }\end{bmatrix}
  }
\end{equation*}
where
\begin{align*}
K_{i,1}^\ell=&\ {\rm diag} \left\{ k^{\ell,h_\ell} I_{d^{\ell, h_\ell}}\right\}_{ h_\ell \in {\mathbb G}_{i,1}^\ell}\\
K_{i,2}^\ell=&\ {\rm diag} \left\{ k^{\ell,h_\ell} I_{d^{\ell, h_\ell}}\right\}_{ h_\ell \in {\mathbb G}_{i,2}^\ell}.
\end{align*}
Note that this amounts to choosing a scalar gain $k^{\ell,h_\ell}$ for each Jordan miniblock $A^{\ell, h_\ell}$, corresponding to some $\ell\in [1,r_u]$ and $h_\ell\in {\mathbb G}_{i,1}^\ell \cup {\mathbb G}_{i,2}^\ell.$ Recall that ${\mathbb G}_{i,1}^\ell \cap {\mathbb G}_{i,2}^\ell=\emptyset$.\\ In order to study the dynamics of the estimation error $e_{i,u} (t) \doteq x_{i,u}(t)-\hat x_{i,u}(t)$, we exploit the block diagonal structure of the matrices. Indeed, it is easy to deduce from \eqref{eq.xiuhat} that for every $\ell\in [1,r_u]$ and every $h_\ell \in {\mathbb G}_{i,1}^\ell \cup {\mathbb G}_{i,2}^\ell$, the  portion of the vector $\hat x_{i,u}(t)$ associated with the miniblock $A^{\ell,h_\ell}$, i.e., $\hat x_{i}^{\ell,h_\ell}(t)$, updates  according to
\begin{align}\label{eq.estimate.undetectble.directed}
        \hat x_{i}^{\ell,h_\ell}&(t+1)=
        A^{\ell,h_\ell}
        \hat x_{i}^{\ell,h_\ell}(t)
+B^{\ell,h_\ell} u(t)\\&+k^{\ell,h_\ell}
        A^{\ell,h_\ell}\sum_{j\in\mathcal{N}_i}[\mathcal{A}]_{i,j}(\hat x_j^{\ell,h_\ell}(t)-\hat x_i^{\ell,h_\ell}(t)).\nonumber
\end{align}
It is worth remarking that $\hat x_j^{\ell,h_\ell}(t)$ is the estimate of $x^{\ell,h_\ell}(t)$ provided by agent $j\in {\mathcal N}_i$, and it is either part of $\hat x_{j,d}(t)$ or of $\hat x_{j,u}(t)$, depending on whether $h_\ell \in {\mathbb G}_{j,3}^\ell$ or $h_\ell \in {\mathbb G}_{j,1}^\ell \cup {\mathbb G}_{j,2}^\ell$.\\
Since the dynamics of the portion of the state $x^{\ell,h_\ell}(t)$ associated with the miniblock $A^{\ell,h_\ell}$ is
\begin{align}\label{eq.miniblock}
         x^{\ell,h_\ell}(t+1)=A^{\ell,h_\ell}x^{\ell,h_\ell}(t)+B^{\ell,h_\ell} u(t),
\end{align}
if we define the error that agent $i$ makes in estimating $x^{\ell,h_\ell}(t)$
as $e_i^{\ell,h_\ell}(t) \doteq x^{\ell,h_\ell}(t)-\hat x_i^{\ell,h_\ell}(t)$, then the dynamics of such estimation error is
\begin{align}
     e_{i}^{\ell,h_\ell}&(t+1)=A^{\ell,h_\ell}
         e_{i}^{\ell,h_\ell}(t)\\
         &- k^{\ell,h_\ell}
        A^{\ell,h_\ell}\sum_{j\in \mathcal{ N}_{i}}[\mathcal{A}]_{i,j}(e_i^{\ell,h_\ell}(t)-e_j^{\ell,h_\ell}(t)),
        \nonumber
\end{align}
and by the definition of a Laplacian matrix $\mathcal{L}$, we can rewrite the previous equation as 
\begin{align}\label{eq.est.err.lap.dir}
     e_{i}^{\ell,h_\ell}(t+1)=&(1 - k^{\ell,h_\ell}[\mathcal{L}]_{i,i})A^{\ell,h_\ell}
         e_{i}^{\ell,h_\ell}(t)\\
         &- k^{\ell,h_\ell}
        A^{\ell,h_\ell}\sum_{j\in \mathcal{ N}_{i}}[\mathcal{L}]_{i,j}e_j^{\ell,h_\ell}(t).\nonumber
\end{align}
The dynamics of the error $e_{i}^{\ell,h_\ell}(t)$ of the $i$th agent is influenced by the estimation errors $e_j^{\ell,h_\ell}(t)$ of its neighbors ($j\in {\mathcal N}_i$), which can be split into two categories: those for which the $h_\ell$th miniblock is completely observable, namely $h_\ell\in\mathbb{G}_{j,3}^\ell$ (equivalently, $j \in \mathcal{V}_{3}^{\ell,h_\ell}$), and the others. 

Therefore,  the sum in \eqref{eq.est.err.lap.dir} can be rewritten as\footnote{To lighten the notation we have omitted to specify that $j$ belongs to ${\mathcal N}_i$, but this constraint is implicit in the fact that $[{\mathcal L}]_{i,j}=0$ if $j\not\in {\mathcal N}_i$.}
        \begin{align}\label{eilh}
        e_{i}^{\ell,h_\ell}(t+1)&=(1 -  k^{\ell,h_\ell}[\mathcal{L}]_{i,i})
         A^{\ell,h_\ell}e_{i}^{\ell,h_\ell}(t)\\
         &- k^{\ell,h_\ell}
        A^{\ell,h_\ell}\sum_{j\in (\mathcal{V}_{1}^{\ell,h_\ell} \cup \mathcal{V}_{2}^{\ell,h_\ell})}[\mathcal{L}]_{i,j}e_j^{\ell,h_\ell}(t) \nonumber
        \\&- k^{\ell,h_\ell}
        A^{\ell,h_\ell}\sum_{j\in \mathcal{V}_{3}^{\ell,h_\ell}}[\mathcal{L}]_{i,j}e_j^{\ell,h_\ell}(t).\nonumber
\end{align}
The last step is to design $k^{\ell,h_\ell}$ for each $(\ell,h_\ell)\in [1,r_u]\times [1, g_\ell]$, such that
\begin{align}
\label{eq.tuttinulli}
    \lim_{t\to+\infty}e_{i}^{\ell,h_\ell}(t)=\mathbb{0},\quad \forall\ i\in[1,N].
\end{align}
We observe that if we define
\begin{align*}
e_u(t) \doteq {\rm col}\{e_{i,u}(t)\}_{i\in [1,N]}, \\ 
e_d(t) \doteq {\rm col}\{e_{i,d}(t)\}_{i\in [1,N]}, \\
\eta_d(t) \doteq {\rm col}\{\eta_{i,d}(t)\}_{i\in [1,N]}, 
\end{align*}
then the dynamics of the estimation error 
$$\begin{bmatrix} e_u(t)\cr e_d(t)\end{bmatrix}$$
updates according to an equation of the following form:
\begin{align}
\label{eq.triang}
\begin{bmatrix} e_u(t+1)\cr \eta_d(t+1)\end{bmatrix}
&= \begin{bmatrix} \Phi_u & \Phi_*S \cr {\mathbb 0} & \Phi_d\end{bmatrix}
\begin{bmatrix} e_u(t)\cr \eta_d(t)\end{bmatrix}\\
\begin{bmatrix}
    e_u(t)\\
    e_d(t)
\end{bmatrix}&=\begin{bmatrix}
    I&\mathbb{0}\\ \mathbb{0}&S
\end{bmatrix}\begin{bmatrix} e_u(t)\cr \eta_d(t)\end{bmatrix}.
\end{align}
where $\Phi_d \doteq {\rm diag} \{ A_{i,d} - L_{i,d}C_{i,d}\}_{i\in [1,N]}$ is Schur,   $S$ is a selection matrix\footnote{It is actually a block diagonal matrix whose diagonal blocks are the matrices $\begin{bmatrix} {\mathbb 0} & I\end{bmatrix}$ appearing in \eqref{eq.lunberger.directed.b}.}, $\Phi_*$ and $\Phi_u$ can be obtained by making use of \eqref{eilh}. Note that the expressions of $S$ and $\Phi_*$ are irrelevant for the subsequent discussion. We will come later to the role of $\Phi_u$.\\
The convergence to zero of $\eta_{d}(t)$ and consequently of $e_{d}(t)$  (and, hence, of  every $e_{i,d}(t)$) has already been discussed. On the other hand, due to the block triangular structure of the matrix in \eqref{eq.triang}, 
condition \eqref{eq.tuttinulli} holds 
for every $\ell\in [1,r_u], h_\ell \in [1,g_\ell]$ and 
$i\in\mathcal{V}_1^{\ell,h_\ell}\cup \mathcal{V}_2^{\ell,h_\ell}$ if and only if 
the matrix $\Phi_u$ is Schur, which is equivalent to the asymptotic stability,  for every $\ell\in [1,r_u]$, $ h_\ell \in [1,g_\ell]$ and $i\in\mathcal{V}_1^{\ell,h_\ell}\cup \mathcal{V}_2^{\ell,h_\ell}$, of the  equations
\begin{align}
       e_{i}^{\ell,h_\ell}(t+1) =&(1 - k^{\ell,h_\ell}[\mathcal{L}]_{i,i})
         A^{\ell,h_\ell}e_{i}^{\ell,h_\ell}(t)\\
         &- k^{\ell,h_\ell}
        A^{\ell,h_\ell}
        \!\!\!\!\!\!\sum_{j\in (\mathcal{V}_{1}^{\ell,h_\ell} \cup \mathcal{V}_{2}^{\ell,h_\ell})} \!\!\![\mathcal{L}]_{i,j}e_j^{\ell,h_\ell}(t), \nonumber
\end{align} 
or, equivalently, of the linear system
\begin{align} \label{eq.error}
e_{u}^{\ell,h_\ell}(t+1)=\left(\left(I - k^{\ell,h_\ell}\mathcal{ L}^{\ell,h_\ell}\right)\otimes A^{\ell,h_\ell}\right) e_{u}^{\ell,h_\ell}(t),
\end{align}
where  
$e_{u}^{\ell, h_\ell} \doteq {\rm col} \{ e_i^{\ell, h_\ell}(t)\}_{i \in\mathcal{V}_1^{\ell,h_\ell}\cup \mathcal{V}_2^{\ell,h_\ell}}$ and 
 $\mathcal{L}^{\ell,h_\ell}$ is the matrix obtained from the Laplacian ${\mathcal L}$ by deleting the rows and columns   indexed in  $\mathcal{V}_3^{\ell,h_\ell}$.

The following theorem provides    necessary and sufficient conditions for the asymptotic stability of   system \eqref{eq.error} for all values of $\ell\in [1,r_u]$ and $ h_\ell \in [1,g_\ell]$, and hence for the existence of a distributed observer of the form \eqref{eq.lunberger.directed} and \eqref{eq.xiuhat}.
\begin{theorem}\label{thmain}
    Consider a system described by \eqref{eq.sys} and \eqref{eq.out}, with communication digraph $\mathcal{G}=(\mathcal{V},\mathcal{E}, \mathcal{A})$.  Let Assumptions \ref{ass.jointdet} and \ref{ass.obs} hold. There exists a distributed observer of the form \eqref{eq.lunberger.directed} and \eqref{eq.xiuhat} if and only if  for all $(\ell,h_\ell)\in [1,r_u]\times [1, g_\ell]$  there exists $k^{\ell,h_\ell}\in\mathbb{R}$ such that
    \begin{equation}|1 - k^{\ell,h_\ell}\mu|<\frac{1}{|\lambda_\ell|},
    \qquad \forall\ \mu \in\sigma\left(\mathcal{L}^{\ell,h_\ell}\right).
    \label{eq.mainineq}
    \end{equation}
\end{theorem}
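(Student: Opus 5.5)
The proof reduces the existence of the observer to Schur stability of the matrices $(I - k^{\ell,h_\ell}\mathcal{L}^{\ell,h_\ell})\otimes A^{\ell,h_\ell}$, one for each unstable miniblock. By the discussion preceding the theorem, the full error dynamics are block triangular as in \eqref{eq.triang}, with $\Phi_d$ Schur by the choice of the gains $L_{i,d}$; this choice is possible because, under Assumptions \ref{ass.jointdet} and \ref{ass.obs}, every pair $(F_{i,d},H_{i,d})$ is detectable. Hence all errors $e_i(t)$ converge to zero for every initial condition if and only if $\Phi_u$ is Schur. Up to a permutation of rows and columns, $\Phi_u$ is block diagonal with diagonal blocks $(I - k^{\ell,h_\ell}\mathcal{L}^{\ell,h_\ell})\otimes A^{\ell,h_\ell}$, one for each pair $(\ell,h_\ell)$ with $\ell\in[1,r_u]$ and $\mathcal{V}_1^{\ell,h_\ell}\cup\mathcal{V}_2^{\ell,h_\ell}\neq\emptyset$; when this set is empty the miniblock is handled locally by every agent and imposes no condition. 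The pairs $(\ell,h_\ell)$ are decoupled and each carries its own scalar $k^{\ell,h_\ell}$, so existence of a distributed observer is equivalent to the following: for each $(\ell,h_\ell)$ there is a $k^{\ell,h_\ell}$ making the corresponding Kronecker product Schur.

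For a fixed pair $(\ell,h_\ell)$ I would then compute the spectrum via the standard property $\sigma(M\otimes N)=\{\alpha\beta:\alpha\in\sigma(M),\beta\in\sigma(N)\}$. Since $A^{\ell,h_\ell}$ is a Jordan miniblock, $\sigma(A^{\ell,h_\ell})=\{\lambda_\ell\}$, and $\sigma(I-k\mathcal{L}^{\ell,h_\ell})=\{1-k\mu:\mu\in\sigma(\mathcal{L}^{\ell,h_\ell})\}$. The eigenvalues of the block are therefore exactly $(1-k^{\ell,h_\ell}\mu)\lambda_\ell$ with $\mu\in\sigma(\mathcal{L}^{\ell,h_\ell})$, possibly complex since the digraph is directed. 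The block is Schur if and only if $|1-k^{\ell,h_\ell}\mu|\,|\lambda_\ell|<1$ for all such $\mu$. Because $\ell\le r_u$, we have $|\lambda_\ell|\ge 1>0$, so dividing by $|\lambda_\ell|$ gives precisely \eqref{eq.mainineq}.

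Both directions follow at once. \emph{Sufficiency}: if each $k^{\ell,h_\ell}$ satisfies \eqref{eq.mainineq}, every diagonal block of $\Phi_u$ is Schur, hence $\Phi_u$ is Schur. Together with $\Phi_d$ Schur, the triangular structure gives that the errors $e_u$ and $e_d$ converge to zero, so $e_i(t)\to 0$ for all $i$. \emph{Necessity}: if some pair $(\ell,h_\ell)$ admits no $k^{\ell,h_\ell}$ satisfying \eqref{eq.mainineq}, then for every choice of gains the corresponding block has an eigenvalue of modulus at least $1$. Choosing the initial condition of \eqref{eq.error} along an associated (generalized) eigenvector produces an error $e_i^{\ell,h_\ell}(t)$, with $i\in\mathcal{V}_1^{\ell,h_\ell}\cup\mathcal{V}_2^{\ell,h_\ell}$, that does not converge to zero. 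Since $x(0)$ and the observers' initial states are arbitrary, this error is admissible, so no observer of the given form works.

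The main obstacle is making the claimed decoupling of $\Phi_u$ rigorous. One must check that the off-diagonal contributions from agents in $\mathcal{V}_3^{\ell,h_\ell}$ enter only through $\eta_d$ (that is, through $\Phi_*S$) and not through $\Phi_u$, so that the triangular structure holds exactly. One must also check that necessity concerns only the structural class \eqref{eq.lunberger.directed}, \eqref{eq.xiuhat} with arbitrary $L_{i,d}$: these gains act only on the Schur-stabilizable block $\Phi_d$ and cannot affect the spectrum of $\Phi_u$. For necessity, if the destabilizing eigenvalue is complex, I would take real and imaginary parts of the eigenvector to obtain a real initial error whose trajectory does not vanish.
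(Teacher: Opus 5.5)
Your proposal is correct and follows the same route as the paper: block-triangular error dynamics with $\Phi_d$ Schur thanks to detectability of $(F_{i,d},H_{i,d})$, decoupling of $\Phi_u$ into the blocks $(I-k^{\ell,h_\ell}\mathcal{L}^{\ell,h_\ell})\otimes A^{\ell,h_\ell}$, and the Kronecker-product spectrum argument of Lemma~\ref{prop.eigs.dir}. Your necessity step is more explicit than the paper's; to make it airtight, initialize the Luenberger observers exactly so that $\eta_d\equiv\mathbb{0}$, which makes the forcing term $\Phi_*S\eta_d$ vanish and lets \eqref{eq.error} describe the error exactly.
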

\begin{proof}
We preliminarily observe that Assumption \ref{ass.obs} ensures that each pair $(A_{i,d}, C_{i,d}), i\in [1,N],$ is detectable.
Therefore, there exist a matrix $L_{i,d}$ that makes  $A_{i,d}-L_{i,d} C_{i,d}$  Schur, and hence $\Phi_d$ is a Schur matrix. This guarantees that each observer \eqref{eq.lunberger.directed} asymptotically estimates $z_{i,d}(t)$ and therefore $x_{i,d}(t)$ for every $i\in [1, N]$. Therefore a distributed observer of the form \eqref{eq.lunberger.directed} and \eqref{eq.xiuhat} exists if and only if we are able to ensure that for every $(\ell,h_\ell)\in [1,r_u]\times [1, g_\ell]$ the estimation error in \eqref{eq.error} asymptotically converges to zero. But this is equivalent to guaranteeing the existence, for every $\ell\in [1,r_u]$ and $h_\ell\in [1, g_\ell],$  of a real coefficient $k^{\ell, h_\ell}$ that makes the matrix $\left(I -  k^{\ell,h_\ell}\mathcal{L}^{\ell,h_\ell}\right)\otimes A^{\ell,h_\ell}$ Schur stable.
By Lemma \ref{prop.eigs.dir}, in the Appendix, this is possible if and only if  for every $(\ell,h_\ell)\in [1,r_u]\times [1, g_\ell]$  there exists $k^{\ell,h_\ell}\in\mathbb{R}$ 
such that \eqref{eq.mainineq} holds.
\end{proof}
\begin{remark}\label{comparison}
It is worthwhile at this point to compare in detail the distributed observer proposed in this paper with the original one in \cite{GaoYang} and, in particular, Theorem \ref{thmain} with Theorem 2 in \cite{GaoYang}.
The main advantage of the  distributed observer proposed in this article is that it is simpler than the original one in that it focuses on the estimate of the subvectors of the state vector corresponding to entire Jordan miniblocks, rather than considering the estimates of portions of such subvectors.
So, the main distinction is between agents that can estimate the whole subvector $x^{\ell, h_\ell}(t)$ and agents that cannot. This is the   criterion that leads to split, for each $i$th agent, the estimate $\hat x_i(t)$ in the two parts $\hat x_{i,d}(t)$ and  $\hat x_{i,u}(t)$.\\  
As a consequence of this choice, the solvability conditions of the problem are considerably simplified for two main reasons. First, the number of inequalities to be verified is greatly reduced: instead of one inequality per entry, we now have a single inequality \eqref{eq.mainineq} for each miniblock. This also means that the number of matrices—obtained by removing specific rows and columns from the original Laplacian ${\mathcal L}$—whose spectra must be computed is significantly lower. Second, rather than imposing a single coupling strength $k$
 for all observers, we can relax the existence conditions by allowing a distinct coupling strength for each pair $(\ell, h_\ell)$. For all these reasons, condition \eqref{eq.mainineq} is less restrictive than condition (30) in \cite{GaoYang}.\\
Finally, condition C2 in     Theorem 2 of \cite{GaoYang} proves to be implicit in \eqref{eq.mainineq}. Indeed, the existence of $k^{\ell, h_\ell}$ satisfying \eqref{eq.mainineq} rules out the possibility that $\mu=0$ is an eigenvalue of ${\mathcal L}^{\ell, h_\ell}$. So, condition \eqref{eq.mainineq} holds true only if (see Lemma 4 in \cite{GaoYang}) for every $(\ell, h_\ell)$ for which ${\mathcal V}_3^{\ell, h_\ell} \ne {\mathcal V}$ it holds what follows: every node indexed in ${\mathcal V}\setminus {\mathcal V}_3^{\ell, h_\ell}$ is reachable from at least one of the nodes indexed in ${\mathcal V}_3^{\ell, h_\ell}$ by means of a   directed path, namely there exists a ``directed spanning forest" rooted in the agents indexed in ${\mathcal V}_3^{\ell, h_\ell}$.
However, we remark again that compared to condition C2 in Theorem 2 of \cite{GaoYang} this communication constraint is imposed on a (typically much) smaller number of sets. 
\end{remark}
\section{Numerical Example}\label{sec.example}
The following example illustrates the proposed approach and also shows that the simplifying assumptions that the system matrix $A$ is already in Jordan form and has all real eigenvalues can be easily removed, to make the analysis completely general.\\ 
Consider a group of $N=6$ Pendubots in a formation, where the goal of each Pendubot is to estimate the joint angles of all Pendubots. Following \cite{GaoYang},  the dynamics of the $i$-th Pendubot  is modeled as  
\begin{equation*}
\zeta_i(t+1)=G\zeta_i(t)+Hu_i(t),\ i\in[1,6],
\end{equation*}
where $\zeta_i(t)\in {\mathbb R}^4$, and the entries  $[\zeta_{i}(t)]_1$ and $[\zeta_{i}(t)]_3$  represent the joint angles, while the system matrices are  
\begin{gather*}
\scriptsize G=\left[\begin{array}{cccc}0.9992& 0.005& 0.0003 &0\\
    -0.3369 &0.9992 &0.1242 &0.0003\\
    0.0008& 0 &1.0007 &0.005\\
    0.3263& 0.0008 &0.2786 &1.0007\end{array}\right],
     H=\left[\begin{array}{c} 0.0006\\0.2243\\-0.0001\\-0.0232\end{array}\right].
\end{gather*}
The overall dynamics of all Pendubots are described by a discrete-time linear time-invariant  system \eqref{eq.sys} with $x(t)\doteq {\rm col} \{\zeta_i(t)\}_{i\in[1,6]}$, $u(t)\doteq{\rm col}\{u_i(t)\}_{i\in[1,6]} $, $A\doteq I_6\otimes G$ and $B\doteq I_6\otimes H$.\\ 
The communication   among the Pendubots is modeled by the unweighted digraph $\mathcal{G}$ in Fig. \ref{graph}, whose Laplacian is
\[
\mathcal{L}=\scriptsize \left[\begin{array}{cccccc}
1 &0& 0& -1& 0 &0\\
    -1& 1& 0& 0& 0& 0\\
    -1& 0 &1 &0& 0& 0\\
     -1 &0& 0& 1 &0& 0\\
    0 &-1& 0 &-1 &2 &0\\
    0 &0& -1 &-1& 0& 2
\end{array}\right].
\]

\begin{figure}[tp!]
  \centering
\includegraphics[width=0.13\textwidth]{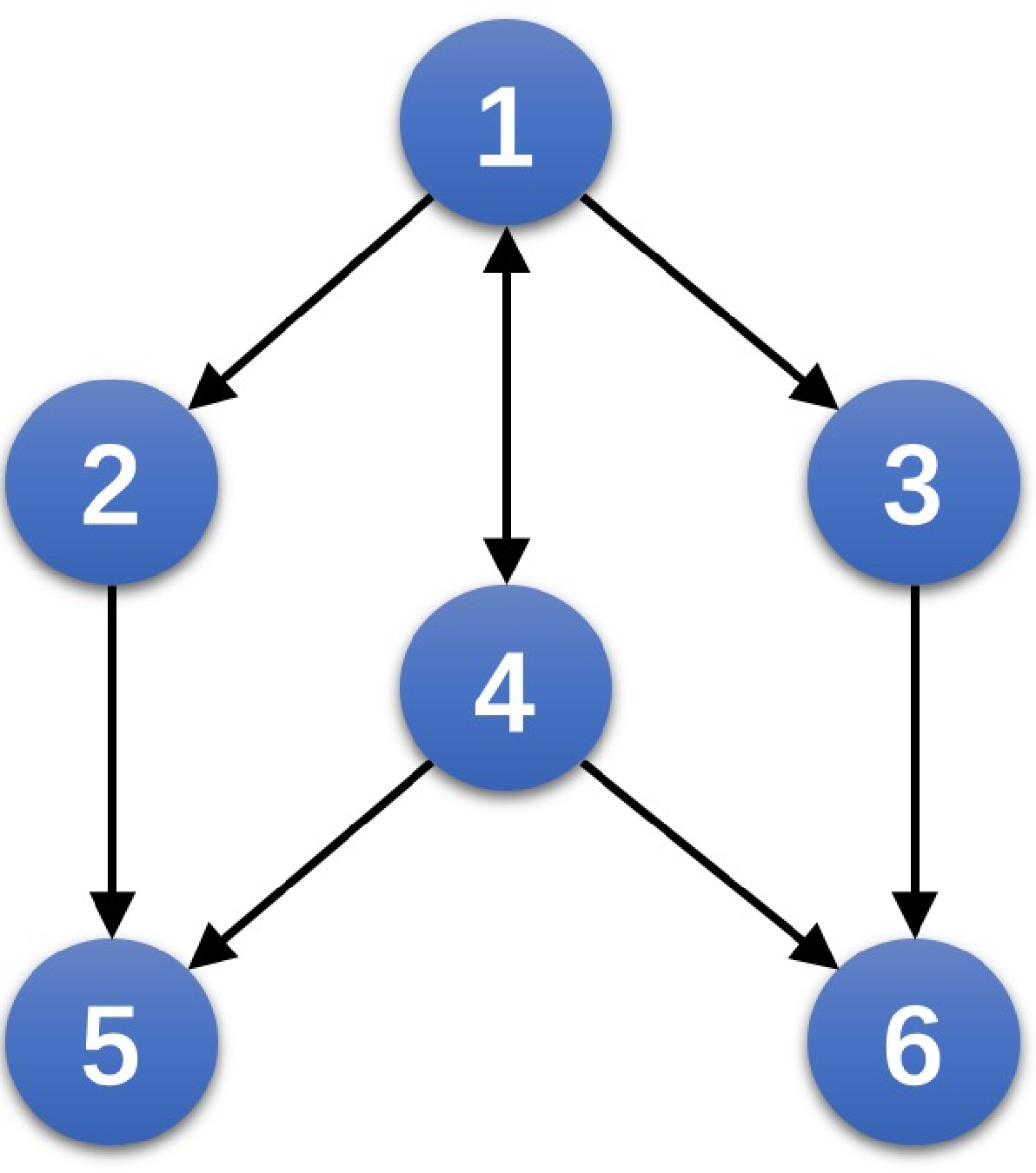}
  \caption{The communication digraph $\mathcal{G}$.}\label{graph}
\end{figure}
\noindent The measurement matrices of the Pendubots are
\[
\setlength{\arraycolsep}{2.6pt}%
\begin{split}
\scriptsize C_1=&\ \scriptsize\left[\begin{array}{cccccc}C &\mathbb{0} &\mathbb{0} &\mathbb{0}&\mathbb{0} &\mathbb{0}\\ C& -C &\mathbb{0} &\mathbb{0}&\mathbb{0} &\mathbb{0}\\C&\mathbb{0} & -C&\mathbb{0}&\mathbb{0} &\mathbb{0}\\ C&\mathbb{0} & \mathbb{0} & -C&\mathbb{0} &\mathbb{0}\end{array}\right],
C_4= \scriptsize \left[\begin{array}{cccccc}\mathbb{0} & \mathbb{0} &\mathbb{0} &C&\mathbb{0} &\mathbb{0}\\ -C& \mathbb{0}&\mathbb{0} &C&\mathbb{0} &\mathbb{0}\\\mathbb{0}&\mathbb{0} & \mathbb{0}&C&-C &\mathbb{0}\\ \mathbb{0}&\mathbb{0} & \mathbb{0} & C&\mathbb{0} &-C\end{array}\right],\\
C_2=&\ \scriptsize\left[\begin{array}{cccccc}\mathbb{0} & C &\mathbb{0} &\mathbb{0}&\mathbb{0} &\mathbb{0}\\ \mathbb{0}& C &\mathbb{0} &\mathbb{0}&-C &\mathbb{0}\end{array}\right],\quad   
C_3= \scriptsize\left[\begin{array}{cccccc}\mathbb{0} & \mathbb{0} &C &\mathbb{0}&\mathbb{0} &\mathbb{0}\\ \mathbb{0}& \mathbb{0} &C &\mathbb{0}&\mathbb{0}&-C\end{array}\right],
\\
C_5=&\ \scriptsize\left[\begin{array}{cccccc}\mathbb{0} & \mathbb{0} &\mathbb{0} &\mathbb{0}&C &\mathbb{0} \end{array}\right],\quad 
C_6= \scriptsize\left[\begin{array}{cccccc}\mathbb{0} & \mathbb{0} &\mathbb{0} &\mathbb{0}&\mathbb{0} &C\end{array}\right],
\end{split}
\]
where $C=\left[\begin{smallmatrix}1&0&0&0\\0&0&1&0\end{smallmatrix}\right]$. \\
First, we compute the eigenvalues of $A$, obtaining 
$$\lambda_1= 0.9991\pm 0.0445\mathrm{i},\quad \lambda_2=1.042,\quad  \lambda_3=0.9597$$ with $a_\ell=g_\ell=6$, for every ${\ell}\in[1,3]$, we verify  that Assumption \ref{ass.jointdet} holds and apply the nonsingular matrix $T=[I_6\otimes T_1\ I_6\otimes T_2\ I_6\otimes T_3]$ 
\begin{gather*}
 \setlength{\arraycolsep}{2pt} 
\scriptsize 
T_1=\left[\begin{array}{cc}0 &  -0.2324\\-2.0702 &0\\0  &  0.1122\\ 1&0\end{array}\right],\
 T_2=\left[\begin{array}{c}0.0223  \\ 0.1839\\0.1215  \\ 1\end{array}\right],\ T_3=\left[\begin{array}{c}-0.0224 \\ 0.1839\\-0.1215 \\ 1\end{array}\right].
\end{gather*}
to bring the matrix $A$ into a Jordan form of the type \eqref{Jform}-\eqref{Jminiblock}.
Subsequently, following the procedure in Sections \ref{sec.pb.setup} and \ref{sec.pb.solution}, we determine the permutation matrices $P_i$, $Q_i$  and $R_i$, $i\in[1,6]$, and we verify that Assumption \ref{ass.obs} holds. Due to space limitations, the matrices are not reported here. Then, for each node $i\in[1,6]$ we select $L_{i,d}$ such that $F_{i,d} - L_{i,d} H_{i,d}$ is Schur stable.

In order to perform the design of the coupling gains, we identify the nodes for which an entire miniblock is observable, namely $\mathcal{V}_{3}^{1,1}=\mathcal{V}_{3}^{2,1}=\{1,4\}$, $\mathcal{V}_{3}^{1,2}=\mathcal{V}_{3}^{2,2}=\{1,2\}$, $\mathcal{V}_{3}^{1,3}=\mathcal{V}_{3}^{2,3}=\{1,3\}$,  $\mathcal{V}_{3}^{1,4}=\mathcal{V}_{3}^{2,4}=\{1,4\}$,  $\mathcal{V}_{3}^{1,5}=\mathcal{V}_{3}^{2,5}=\{2,4,5\}$, and $\mathcal{V}_{3}^{1,6}=\mathcal{V}_{3}^{2,6}=\{3,4,6\}$. Based on this, we can compute the eigenvalues of  $\mathcal{L}^{\ell,h_\ell}$  for $\ell\in[1,2]$ and ${h_\ell}\in[1,6]$. Then, by solving the inequalities \eqref{eq.mainineq}, we obtain the ranges of $k^{\ell,h_\ell}$, $\ell\in[1,2]$, and $h_\ell\in[1,6]$,
\[
\begin{split}
0<k^{1,h_\ell}<1,\quad 
0.0403<k^{2,h_\ell}<0.9798.
\end{split}
\]
Once we assume $k^{1,1}=k^{2,1}=0.2$, $k^{1,2}=k^{2,2}=0.3$, $k^{1,3}=k^{2,3}=0.4$, $k^{1,5}=k^{2,5}=0.5$, $k^{1,6}=k^{2,6}=0.6$, $k^{1,7}=k^{2,7}=0.7$,  the gain parameters $K_i$ in \eqref{eq.xiuhat} are determined. \smallskip

The simulation results are plotted in Fig. \ref{x}. We can see that  each Pendubot can asymptotically estimate the joint angles of all Pendubots. This shows the effectiveness of the  proposed  method. 
\begin{figure}[tp!]
  \centering
  \hspace{-1pt}\includegraphics[width=0.24\textwidth]{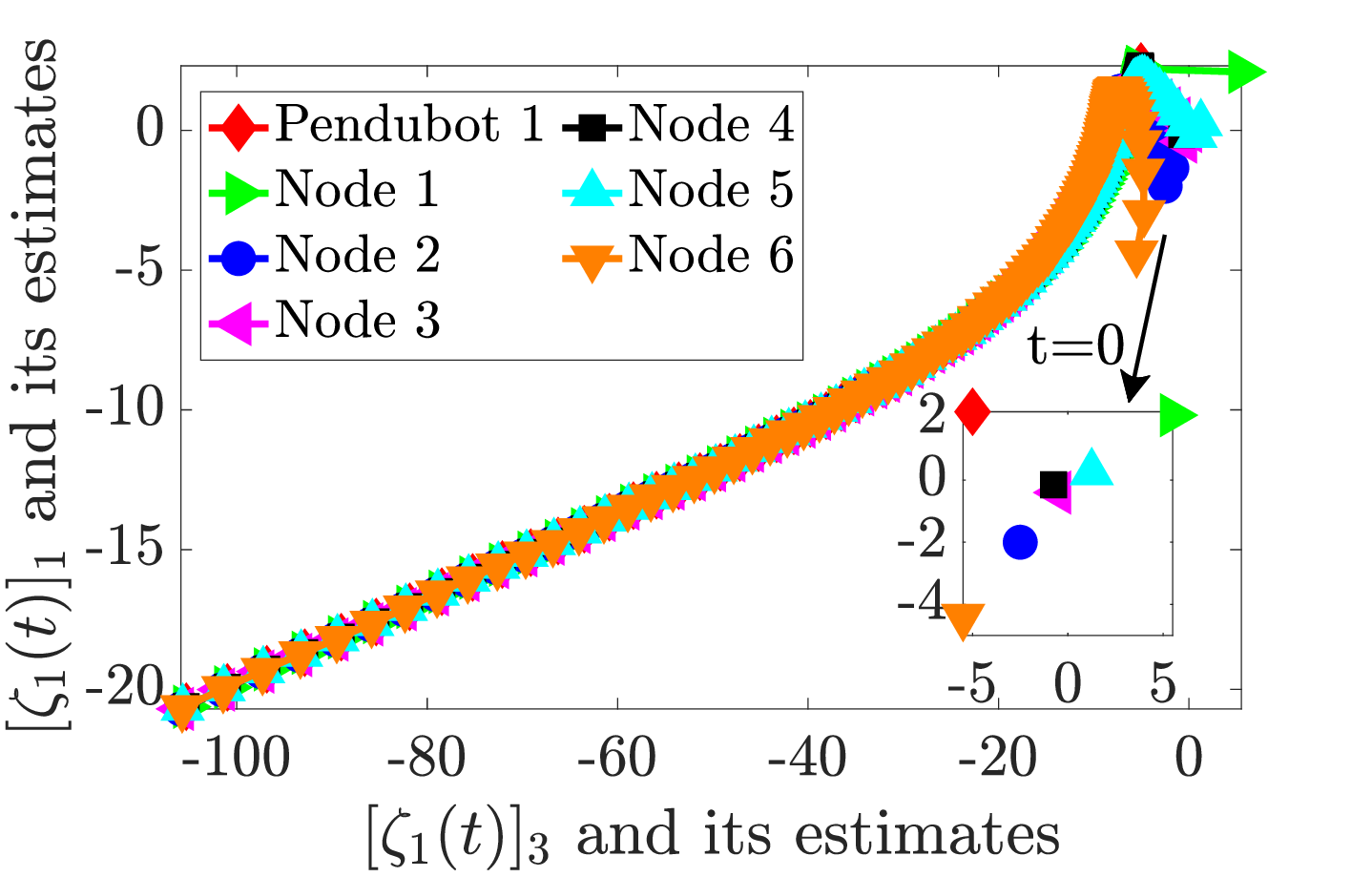}
  \includegraphics[width=0.24\textwidth]{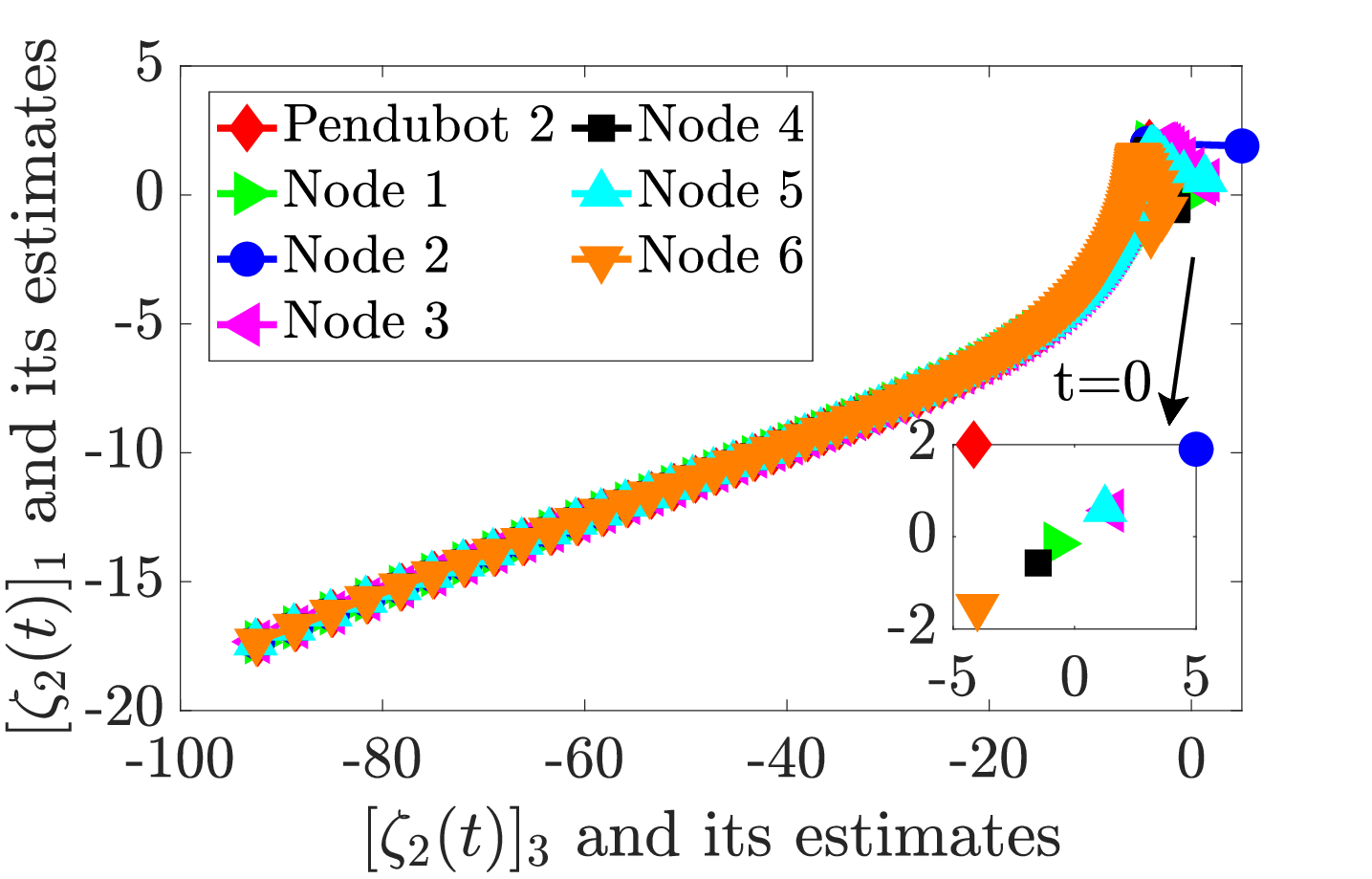}\\
 \hspace{-4pt} \includegraphics[width=0.24\textwidth]{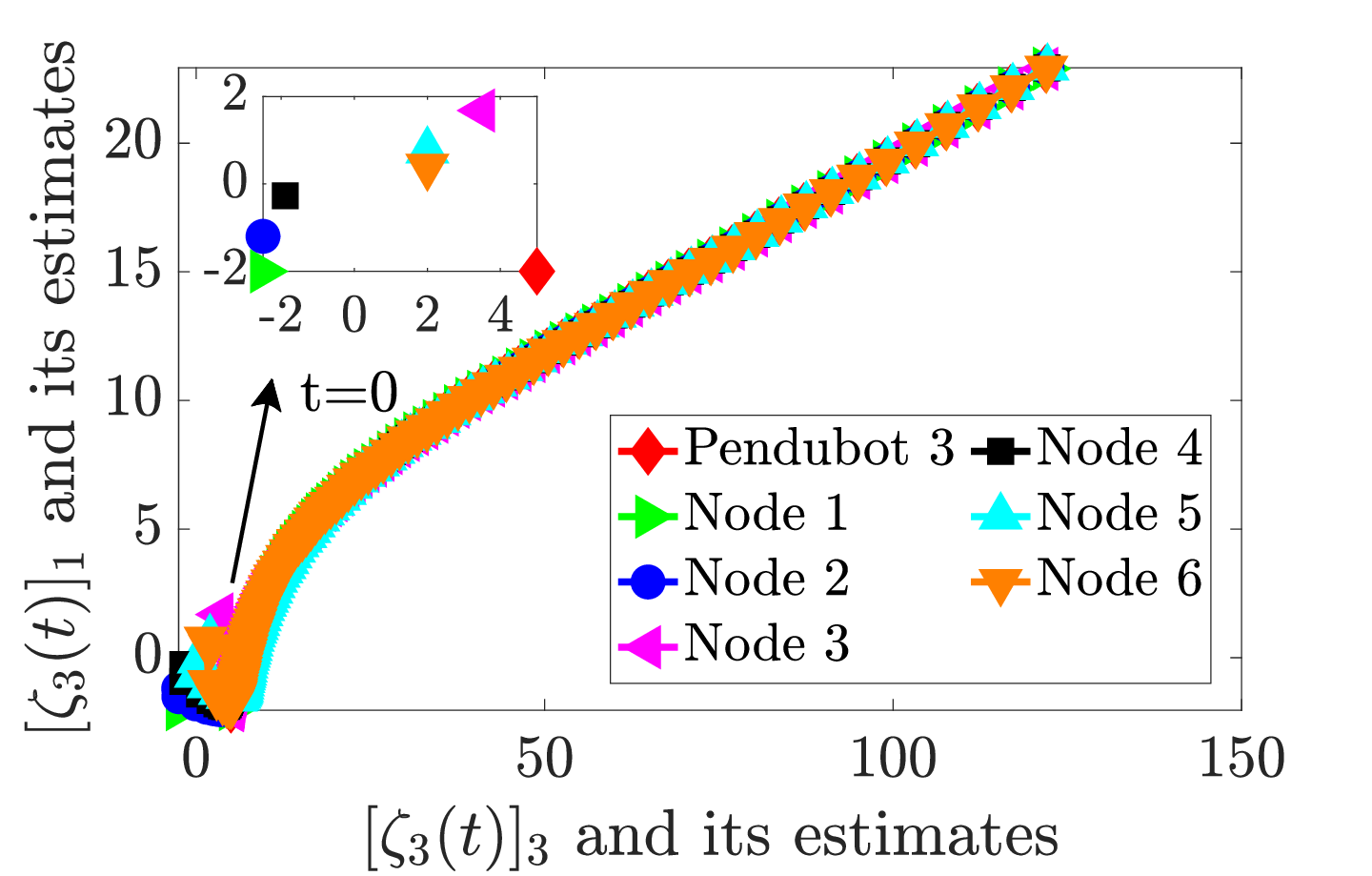}
  \includegraphics[width=0.24\textwidth]{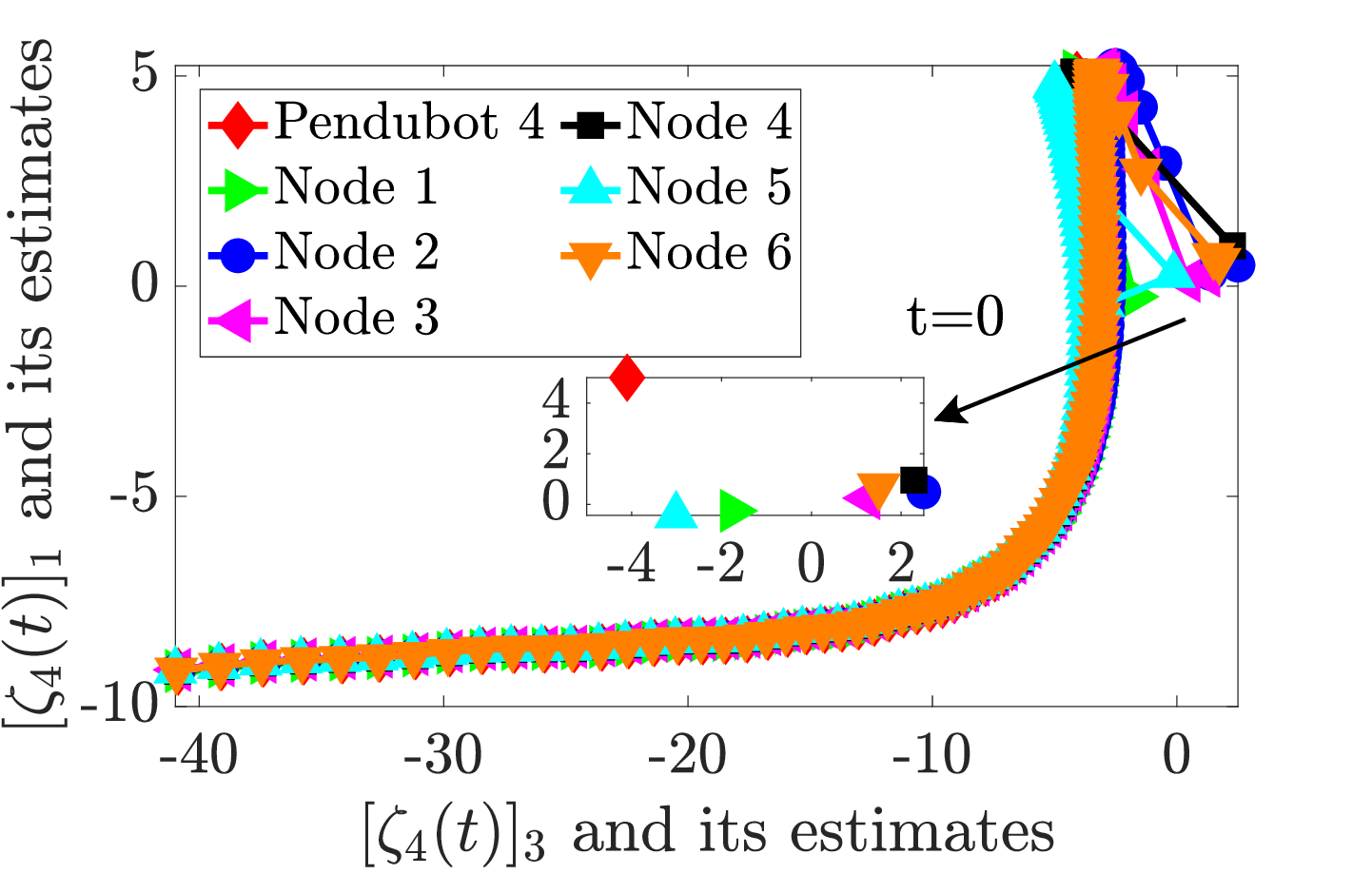}\\
 \hspace{-1pt}\includegraphics[width=0.24\textwidth]{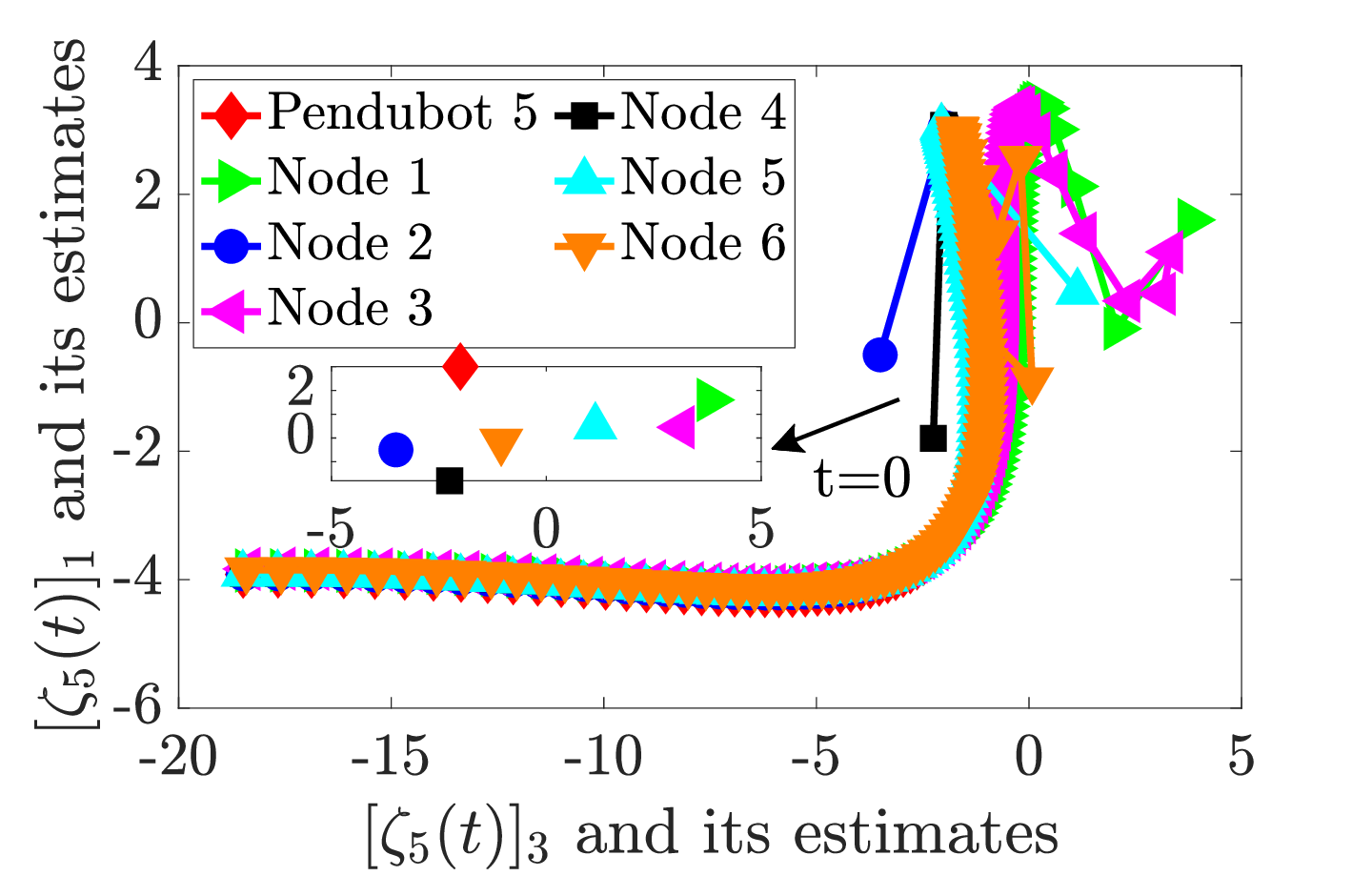}
\includegraphics[width=0.24\textwidth]{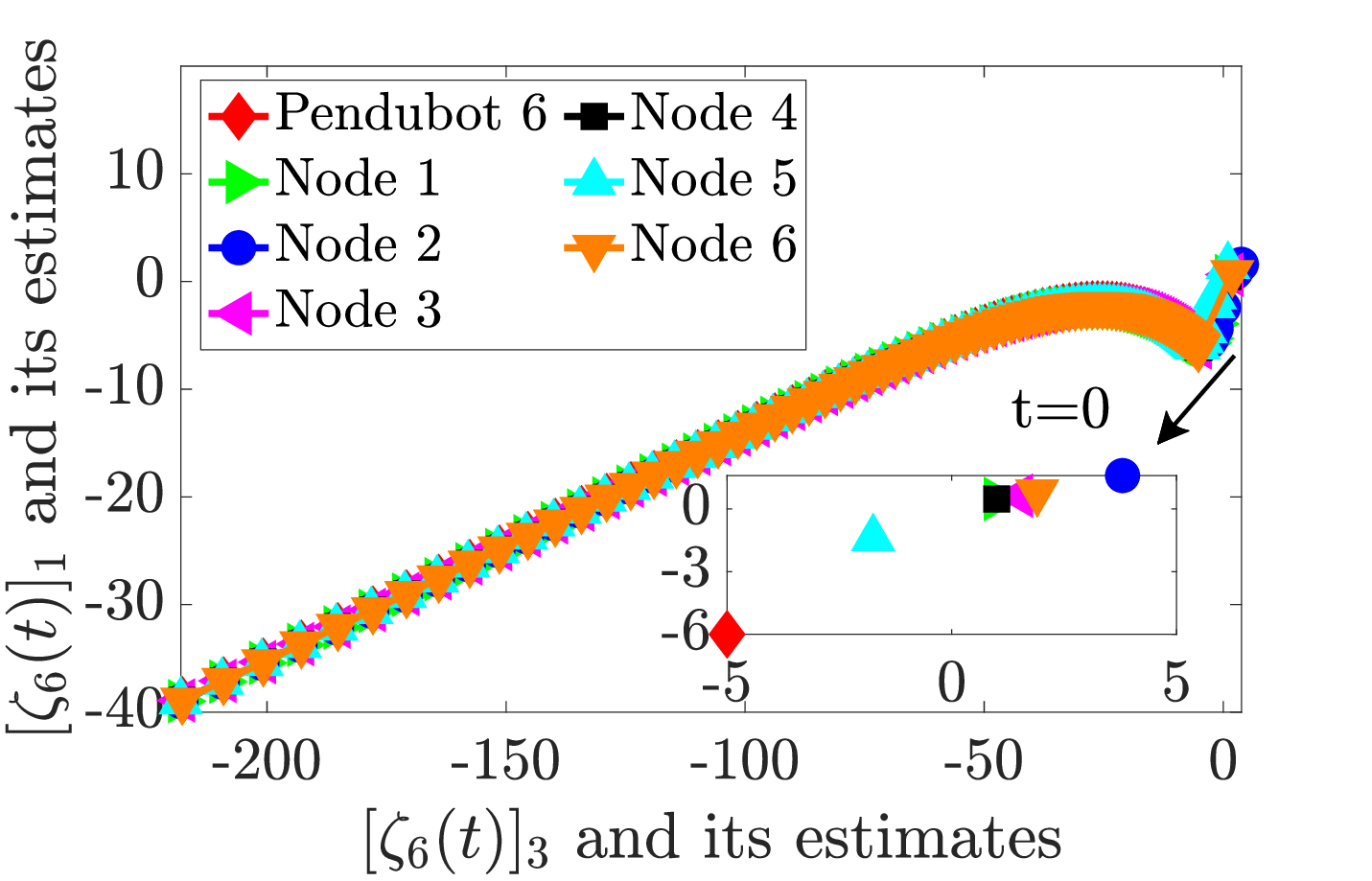}
  \caption{The joint angles of the Pendubot and their estimates.}\label{x}
\end{figure}
\section{Conclusions}\label{sec.conclusion}
In this paper, we have proposed a distributed state estimation scheme that leverages both the Jordan canonical form of the system matrix $A$ and the standard (Kalman) detectability form of each pair $(A,C_i)$. We first introduced the Jordan canonical decomposition and showed that, under suitable assumptions, there exists a permutation matrix that brings all the pairs of system and output matrices into standard detectability form.  Each $i$th agent asymptotically estimates the part of the state  that is detectable for it, $z_{i,d}(t)$, using a Luenberger observer, but of the vector $\hat z_{i,d}(t)$ only the blocks $\hat x_i^{\ell, h_\ell}(t)$ corresponding to Jordan miniblocks that are completely detectable are retained in $\hat x_{i,d}(t)$. Then agent $i$ estimates the remaining part of the state, $x_{i,u}(t)$, through a consensus-based algorithm, with a specific coupling gain for each Jordan miniblock. Necessary and sufficient conditions for the solvability of the distributed estimation problem were derived. A detailed comparison between the conditions for problem solvability derived in this paper and those first proposed in \cite{GaoYang} was provide in Remark \ref{comparison}. Finally, a numerical example showed the effectiveness of the results.

Future work will focus on extending this approach to   systems affected by unknown inputs and disturbances, thereby further enhancing the robustness and applicability of the proposed method.

\appendix
\renewcommand{\thetheorem}{A.\arabic{theorem}} 
\setcounter{theorem}{0}

\begin{lemma}
\label{prop.eigs.dir}
Given a Jordan miniblock $A$ corresponding to some eigenvalue $\lambda\in {\mathbb R}, |\lambda|\ge 1$, a   matrix $L\in {\mathbb R}^{n\times n}$ and a scalar $k\in {\mathbb R}$, the following facts are equivalent.
\begin{itemize}
    \item[i)] The eigenvalues of $\left(I_n - k L\right)\otimes A$ have moduli smaller than $1$.
    \item[ii)] 
    For every $\mu \in\sigma(L)$, one has
    \begin{equation}
        \label{eq.diseq}
    |1 - k \mu|<\frac{1}{|\lambda|}.
    \end{equation}
\end{itemize}
\end{lemma}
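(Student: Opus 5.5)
The plan is to use the spectral property of Kronecker products: for square matrices $M$ and $A$, the eigenvalues of $M\otimes A$ are exactly the products $\nu\alpha$ with $\nu\in\sigma(M)$ and $\alpha\in\sigma(A)$, counted with multiplicity. A quick justification is to triangularize both factors, $M=U T_M U^{-1}$ and $A=V T_A V^{-1}$ with $T_M,T_A$ upper triangular. Then
\begin{equation*}
M\otimes A=(U\otimes V)(T_M\otimes T_A)(U\otimes V)^{-1},
\end{equation*}
and $T_M\otimes T_A$ is upper triangular with diagonal entries $[T_M]_{i,i}[T_A]_{j,j}$.

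Apply this with $M=I_n-kL$ and $A$ the given Jordan miniblock. First I identify the two spectra. Since $A$ is a Jordan miniblock associated with $\lambda$, we have $\sigma(A)=\{\lambda\}$. By the spectral mapping for polynomials, $\sigma(I_n-kL)=\{1-k\mu:\ \mu\in\sigma(L)\}$, which is immediate from triangularizing $L$. It follows that
\begin{equation*}
\sigma\big((I_n-kL)\otimes A\big)=\{(1-k\mu)\lambda:\ \mu\in\sigma(L)\}.
\end{equation*}

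Both directions of the equivalence then follow at once. Statement i) holds if and only if $|1-k\mu|\,|\lambda|<1$ for every $\mu\in\sigma(L)$. Because $|\lambda|\ge 1>0$, we may divide by $|\lambda|$, and this is exactly \eqref{eq.diseq}. The hypothesis $|\lambda|\ge1$ is used only to guarantee $\lambda\neq 0$, which makes the division legitimate. Note that $\mu$ may be complex even though $L$ is real; this causes no difficulty because the argument works entirely over ${\mathbb C}$.

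The only step that needs any care is the Kronecker eigenvalue fact, and in particular that it holds even though neither factor need be diagonalizable ($A$ is a nontrivial Jordan block, and $L$, a principal submatrix of a digraph Laplacian, may be defective). The simultaneous Schur or Jordan triangularization above handles this, since the Kronecker product of two upper triangular matrices is upper triangular. Hence the spectrum, and with it the Schur stability, is determined solely by the products of the diagonal entries, and no diagonalizability is required.
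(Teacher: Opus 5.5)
Your proof is correct and follows essentially the same route as the paper: both show $\sigma\big((I_n-kL)\otimes A\big)=\{(1-k\mu)\lambda:\ \mu\in\sigma(L)\}$ from the Kronecker eigenvalue property and the spectral mapping $\sigma(I_n-kL)=\{1-k\mu:\ \mu\in\sigma(L)\}$, then divide by $|\lambda|\neq 0$. The only difference is that the paper cites the Kronecker fact and gets the spectral mapping from a Jordan form of $L$, whereas you justify both by triangularization.
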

\begin{proof}  
Let
$\sigma(I_n - k L) =
\{\alpha_1, \dots, \alpha_n\}$.
By the properties of the Kronecker product, 
$$\sigma\left((I_n - k L)\otimes A\right) =
\{\alpha_1 \lambda, \dots, \alpha_n \lambda\}.$$
    Let $T$ be a transformation matrix that reduces $L$ to Jordan form
    $$T^{-1} L T = J_L.$$
    Then it holds that
    \begin{align*}
        &I_n - kL =I_n - k T J_L T^{-1} 
        =T (I_n - k J_L)T^{-1}
    \end{align*}
which ensures that 
$\{\alpha_1, \dots, \alpha_n \} = \sigma(I_n - k L) = \sigma(I_n - k J_L) = \{ 1 - k \mu,  \mu\in \sigma(L)\}.$
\\
    This immediately proves that i) 
    holds, namely $|\alpha_i \lambda|<1$ for every $i\in [1,n]$, if and only if 
     \eqref{eq.diseq} holds for every $\mu\in \sigma(L)$, which is condition ii).
\end{proof}

\end{document}